\documentclass{article}

\usepackage[utf8]{inputenc} % allow utf-8 input
\usepackage[T1]{fontenc}    % use 8-bit T1 fonts
\usepackage{hyperref}       % hyperlinks
\usepackage{url}            % simple URL typesetting
\usepackage{booktabs}       % professional-quality tables
\usepackage{amsfonts}       % blackboard math symbols
\usepackage{nicefrac}       % compact symbols for 1/2, etc.
\usepackage{microtype}      % microtypography
\usepackage{xcolor}         % colors
\usepackage{fullpage}

\usepackage{times}

\usepackage{amsfonts,amsthm,amssymb,multirow}
\usepackage{thmtools,thm-restate}
\usepackage{mathtools} % amsmath with fixes and additions
\usepackage{booktabs} % commands to create good-looking tables
\usepackage{tikz} % nice language for creating drawings and diagrams

\usepackage{float} % added by Yu Chen

% gives error with nips style file
% \usepackage{subcaption} % added by Yu Chen

% use Times
\usepackage{times}
% For figures
% \usepackage{graphicx} % more modern
%\usepackage{epsfig} % less modern
\usepackage{subfigure}

% % For citations
% \usepackage{natbib}

% For algorithms

\usepackage{todonotes}
\usepackage{color}

\usepackage[procnumbered,ruled,vlined,linesnumbered]{algorithm2e}

% Algorithm2e
\DontPrintSemicolon
\SetKw{KwAnd}{and}
\SetProcNameSty{textsc}
\SetFuncSty{textsc}
\SetKwInOut{Input}{Input}
\SetKwInOut{Output}{Output}
% Custom (non-math) commands
%\input{commands}

\usepackage{times}
\usepackage{graphicx}
\usepackage{floatpag}
\usepackage{bbold}
\usepackage{enumitem}
\usepackage{calc}
\usepackage{tikz}
\definecolor{darkgreen}{rgb}{0,0.5,0.0}
\usepackage{hyperref}
\hypersetup{
    unicode=false,          % non-Latin characters in Acrobat’s bookmarks
    colorlinks=true,        % false: boxed links; true: colored links
    linkcolor=blue,          % color of internal links (change box color with linkbordercolor)
    citecolor=darkgreen,        % color of links to bibliography
    filecolor=magenta,      % color of file links
    urlcolor=cyan           % color of external links
}
\usetikzlibrary{decorations.markings}
\tikzstyle{vertex}=[circle, draw, inner sep=0pt, minimum size=4pt, fill = black]

\usepackage{graphicx}
\usepackage{tabularx}
\makeatletter
\newcommand{\multiline}[1]{%
  \begin{tabularx}{\dimexpr\linewidth-\ALG@thistlm}[t]{@{}X@{}}
    #1
  \end{tabularx}
}
\makeatother

\usepackage[capitalize, nameinlink]{cleveref}
\crefname{theorem}{Theorem}{Theorems}
\crefname{lemma}{Lemma}{Lemmas}
\Crefname{invariant}{Invariant}{Invariants}
\Crefname{claim}{Claim}{Claims}
\Crefname{observation}{Observation}{Observations}
\Crefname{algorithm}{Algorithm}{Algorithms}
\Crefname{figure}{Figure}{Figures}

\usepackage{mathtools}
\DeclarePairedDelimiter\abs{\lvert}{\rvert}%

\makeatletter
\def\BState{\State\hskip-\ALG@thistlm}
\makeatother

\newcommand{\ceil}[1]{\left \lceil #1 \right \rceil}
\newcommand{\floor}[1]{\left \lfloor #1 \right \rfloor}

\newtheorem{fact}{Fact}[section]

\newtheorem{theorem}{Theorem}[section]

\newtheorem{corollary}{Corollary}[section]

\newtheorem{definition}{Definition}[section]
\newtheorem{lemma}{Lemma}[section]
\newtheorem{claim}{Claim}

\makeatletter
\let\c@fconjecture\c@conjecture
\makeatother

\makeatletter
\let\c@fconj\c@conj
\makeatother

\def \R {{\mathbb R}}

\def \cA {{\mathcal A}}
\def \cC {{\mathcal C}}
\def \cF {{\mathcal F}}

\def \cU {{\mathcal U}}
\def \eps {\epsilon}

\newcommand{\ignore}[1]{}

\def \poly { \text{\rm poly~} }

\def\tG{\widetilde{G}}
\newcommand{\allcuts}{\cC_{\rm all-st-cuts}}

\newcommand{\rb}[1]{\left( #1 \right)}
\newcommand{\prob}[1]{\Pr \left[ #1 \right]}
\newcommand{\E}[1]{{\mathbb{E}}\left[#1\right]}
\newcommand{\eqdef}{\stackrel{\text{\tiny\rm def}}{=}}

\DeclareMathOperator*{\Lap}{Lap}
\DeclareMathOperator*{\Exp}{Exp}

\newcommand{\FLap}{F_{\Lap}}
\newcommand{\fLap}{f_{\Lap}}

\usepackage{etoolbox}
\newbool{supplementary}
\booltrue{supplementary}
\boolfalse{supplementary}
\newcommand{\inSupp}[1]{\ifbool{supplementary}{the Supplementary material}{#1}}

\author{%
Mina Dalirrooyfard\thanks{Machine Learning Research, Morgan Stanley. Email: {\tt mina.dalirrooyfard@morganstanley.com.}} \qquad Slobodan Mitrović\thanks{Department of Computer Science, UC Davis. Email: {\tt smitrovic@ucdavis.edu}.} \qquad Yuriy Nevmyvaka\thanks{Machine Learning Research, Morgan Stanley. Email: {\tt yuriy.nevmyvaka@morganstanley.com}} 
}

\title{Nearly Tight Bounds
For Differentially Private Min $s$-$t$ and Multiway Cut
}
\date{}

\begin{document}
\maketitle

\begin{abstract}
    Finding min $s$-$t$ cuts in graphs is a basic algorithmic tool with applications in image segmentation, community detection, reinforcement learning, and data clustering. In this problem, we are given two nodes as terminals %subsets of vertices 
    and the goal is to remove the smallest number of edges from the graph so that these two terminals
    %subsets 
     are disconnected. We study the complexity of differential privacy for the min $s$-$t$ cut problem and show nearly tight lower and upper bounds where we achieve privacy at no cost for running time efficiency. We also develop a differentially private algorithm for the multiway $k$-cut problem, in which we are given $k$ nodes as terminals %subsets of vertices 
     that we would like to disconnect.
    As a function of $k$, we obtain privacy guarantees that are exponentially more efficient than applying the advanced composition theorem to known algorithms for multiway $k$-cut.
    Finally, we empirically evaluate the approximation of our differentially private min $s$-$t$ cut algorithm and show that it almost matches the quality of the output of non-private ones.
\end{abstract}

\section{Introduction}
Min $s$-$t$ cut, or more generally Multiway $k$-cut, is a fundamental problem in graph theory and occupies a central place in combinatorial optimization. Given a weighted graph and $k$ terminals, the multiway $k$-cut problem asks to divide the nodes of the graph into $k$ partitions such that (1) each partition has exactly one terminal, and (2) the sum of the weights of edges between partitions, known as the \emph{cut value}, is minimized \cite{dahlhaus1992complexity}. Multiway $k$-cut is a clustering method used in a large variety of applications, including energy minimization \cite{kolmogorov2007minimizing} and image segmentation \cite{peng2011image,tao2010interactively} in vision,
reinforcement learning \cite{menache2002q}, community detection \cite{zhang2021mixed, shin2022graph} and many other learning and clustering tasks \cite{blum2001learning,blum2004semi,singh2021identifying,liu2020strongly}.

The above applications are carried on large data sets, and more and more frequently, those applications are executed on sensitive data. Hence, designing algorithms that preserve data privacy has been given substantial attention recently.
A widely used and conservative standard of data privacy is \emph{differential privacy (DP)} developed by Dwork \cite{dwork2006differential}, which indicates that an algorithm is differentially private if, given two \emph{neighboring} databases, it produces statistically indistinguishable outputs. 
When DP is applied to graph data, two variants of DP were introduced \cite{hay2009accurate}: in edge DP, two neighboring graphs differ in only one edge, and in node DP, two graphs are neighboring if one graph is obtained from removing one node and all the edges incident to that node from the other graph. In this work, we focus on the edge DP.
In a significant fraction of cases, the nodes of a network are public, but the edge attributes -- which are the relationships between nodes -- are private. 

For many clustering algorithms such as $k$-means, $k$-median, and correlation clustering,  tight or near-tight differentially private algorithms have already been developed \cite{clustering1,clustering2,clustering3,cohen2022near}. In particular, tight algorithms for differentially private \emph{min-cut} have been long known \cite{gupta2010differentially}, where the min-cut problem asks to divide the graph into two partitions such that the cut value is minimized; in the min-cut problem, there is no restriction that two nodes must be on different sides of the cut.
Even though min-cut and min $s$-$t$ cut might seem similar, the algorithm techniques for min-cut do not extend to min $s$-$t$ cut. A glimpse of this difference can be seen in the fact that there exist polynomially many min-cuts in any graph, but there can be exponentially many min $s$-$t$ cuts for fixed terminals $s$ and $t$\footnote{Take $n-2$ nodes $v_1,...,v_{n-2}$ in addition to $s$ and $t$, and add $v_i$ to both $s$ and $t$ for all $i$. Suppose the weight of all edges is $1$. There are $2^{n-2}$ min $s$-$t$ cuts, as each min s-t cut should remove exactly one edge from each $s-v_i-t$ path; either $(s, v_i)$ or $(v_i, t)$ is fine. 
In this example, there are $n-2$ min cuts: for each $i$, remove edges $(v_i,s)$ and $(v_i,t)$, which is a cut of size $2$. Moreover, one can show that the number of min-cuts is polynomial for any graph; there are at most ${n \choose 2}$ of them, please see \cite{dinits1991structure}.
}.

\paragraph{Our Results and Technical Overview} 
In this paper, we provide the edge-differentially private algorithm for min $s$-$t$ cut and prove that it is almost tight. To the best of our knowledge, this is the first DP algorithm for the min $s$-$t$ cut problem. 

Our first result is an $\epsilon$-private algorithm for min $s$-$t$ cut with $O(\frac{n}{\epsilon})$ additive error.

\begin{restatable}{theorem}{dpstcut}\label{thm:dpstcut}
    For any $\epsilon>0$ and for weighted undirected graphs, there is an algorithm for min $s$-$t$ cut that is $(\epsilon,0)$-private and with high probability returns a solution with additive error $O(\frac{n}{\epsilon})$. Moreover, the running time of this private algorithm is the same as the running time of the fastest non-private one.
\end{restatable}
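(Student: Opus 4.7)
My plan is to apply the exponential mechanism to the space $\mathcal{C}_{s,t}$ of all $s$-$t$ cuts of $G$, with score function $q(C) = -w(C)$. Assuming edge weights in $[0,1]$, changing a single edge alters $w(C)$ by at most $1$ for any fixed $C$, so the sensitivity of $q$ is $1$. The exponential mechanism that samples $C$ with probability proportional to $\exp(-\epsilon\, w(C)/2)$ is therefore $(\epsilon,0)$-differentially private. Since $|\mathcal{C}_{s,t}| \le 2^{n-2}$, its standard utility theorem gives, with probability $1 - 1/\operatorname{poly}(n)$, a cut whose weight exceeds the optimum by at most $O(\log|\mathcal{C}_{s,t}|/\epsilon) = O(n/\epsilon)$, matching the claimed additive error.

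The more delicate part is realizing this mechanism in the same time as a non-private max-flow/min-cut computation, despite $\mathcal{C}_{s,t}$ possibly being exponentially large. The plan is to avoid enumerating or sampling over $\mathcal{C}_{s,t}$ explicitly and instead sample \emph{implicitly}: perturb the edge capacities of $G$ with independent random noise of scale $O(1/\epsilon)$ and then run any standard non-private min $s$-$t$ cut algorithm on the perturbed graph. Because $w(C) = \sum_{e\in C} w(e)$ is linear in the edge weights, a per-edge perturbation shifts $w(C)$ by a sum of per-edge noises, so with the right choice of noise distribution the distribution over output cuts can either be shown to coincide with the exponential mechanism, or to be a post-processed output of a per-edge privatization whose utility bound in the cut space still yields $O(n/\epsilon)$ error.

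The main obstacle I anticipate is that a naive per-edge $\Lap(1/\epsilon)$ perturbation inflates the noise on a cut with up to $\Theta(n^2)$ edges, so the min cut of the perturbed graph might be a large cut whose true weight exceeds $\mathrm{OPT}$ by $\omega(n/\epsilon)$. Getting the utility bound will thus require either restricting the perturbation to a structurally small set of edges (for example along a cut-preserving tree, Gomory--Hu-like structure, or sparsifier) so that the total noise accumulated on any cut is $O(n/\epsilon)$ irrespective of $|C|$, or directly coupling the noise distribution with the cut space so that the cut-level exponential-mechanism utility bound can be invoked. Once this step is handled, $(\epsilon,0)$-privacy is immediate from per-edge parallel composition together with post-processing, and the total running time is a single call to the fastest non-private max-flow algorithm on the perturbed instance, as the theorem requires.
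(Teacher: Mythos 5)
Your first step (the exponential mechanism over all $s$-$t$ cuts) does give an inefficient $(\epsilon,0)$-DP algorithm with $O(n/\epsilon)$ error, and your general instinct for the efficient version --- perturb the input and run a non-private min $s$-$t$ cut solver --- is exactly the paper's strategy. But the part you defer is precisely the part that carries all the difficulty, and the way you propose to close it does not work. Your privacy claim, ``immediate from per-edge parallel composition together with post-processing,'' only applies if you add two-sided noise (e.g.\ $\Lap(1/\epsilon)$) to \emph{every} vertex pair so that the released noisy weight vector is itself private; that scheme both destroys the running-time claim (negative capacities, so max-flow no longer computes a min cut) and, as you yourself note, does not give $O(n/\epsilon)$ utility --- a union bound over $2^{n-2}$ cuts with up to $\Theta(n^2)$ noisy edges each only yields roughly $O(n^{3/2}/\epsilon)$. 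Your proposed fixes for utility (noise only on a Gomory--Hu-like tree or a sparsifier) break the privacy argument instead: those structures are data-dependent, so restricting the noise to them leaks information and the post-processing argument no longer applies, and your alternative claim that a per-edge perturbation ``coincides with the exponential mechanism'' over cuts is asserted, not proved, and is false in general.

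The paper's resolution is different in the one place where it matters. It adds one-sided $\Exp(\epsilon)$-weight edges only on the two \emph{terminal stars} ($s$--$u$ and $t$--$u$ for every non-terminal $u$), a data-independent set of $2(n-1)$ edges of which every $s$-$t$ cut contains exactly one per non-terminal vertex; this gives nonnegative weights (so any max-flow code runs unchanged) and total added weight $O(n/\epsilon)$ whp (\cref{lem:stcut-additive-error}). Crucially, privacy is \emph{not} obtained by privatizing edge weights and post-processing --- the changed edge $uv$ typically receives no noise at all. Instead, the proof conditions on all noise variables except $X_{s,u},X_{s,v},X_{t,u},X_{t,v}$, observes that the output must then be one of four candidate cuts (indexed by which of $u,v$ sit with $s$), uses that differences of exponentials are Laplace, and bounds the ratio of the probabilities that a given candidate wins in the two neighboring graphs by $e^{4\epsilon}$ via a dedicated two-dimensional Laplace tail estimate (\cref{lemma:a-mish-mash-of-two-Lap-rv}), plus an isolation-lemma tie-breaking step. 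None of this structural argument appears in your proposal, so as written the efficient construction --- the actual content of the theorem beyond the exponential-mechanism baseline --- remains unproved.
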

Moreover, our proof of \cref{thm:dpstcut} extends to when an edge-weight changes by at most $\tau$ in two neighboring graphs. In that case, our algorithm is $(\eps, 0)$-private with $O(n \cdot \tau / \eps)$ additive error.

Furthermore, our approach uses the existing min $s$-$t$ cut algorithms in a black-box way. More precisely, our method changes the input graph in a specific way, i.e., it adds $2(n-1)$ edges with special weight, and an existing algorithm does the rest of the processing. In other words, our approach automatically transfers to any computation setting -- centralized, distributed, parallel, streaming -- as long as there is a non-private min $s$-$t$ cut for that specific setting.
The main challenge with this approach is showing that it actually yields privacy.
Perhaps surprisingly, our approach is almost optimal in terms of the approximation it achieves. Specifically, we complement our upper-bound by the following lower-bound result.

\begin{restatable}{theorem}{thmlb}\label{thm:lb}
Any $(\epsilon,\delta)$-differential private algorithm for min $s$-$t$ cut on $n$-node graphs requires expected additive error of at least $n/20$ for any $\epsilon\le 1$ and $\delta\le 0.1$.
\end{restatable}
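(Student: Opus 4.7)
The plan is to reduce privately approximating min $s$-$t$ cut to privately releasing a binary input vector, and then apply a tight randomized-response-style per-coordinate lower bound.

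\textbf{Construction.} For each $b \in \{0,1\}^{n-2}$ I will build the graph $G_b$ on $\{s, t, v_1, \ldots, v_{n-2}\}$ with edge weights $w(s, v_i) = 1 + b_i$ and $w(v_i, t) = 3/2$. Any $s$-$t$ cut is encoded by $\sigma \in \{0,1\}^{n-2}$ with $\sigma_i = 1$ iff $v_i$ lies on the $t$-side. A per-vertex case analysis will yield
\[
\mathrm{cost}_{G_b}(\sigma) \;=\; \mathrm{opt}(G_b) \;+\; \tfrac{1}{2}\bigl|\{i : \sigma_i = b_i\}\bigr|,
\]
with the unique minimum at $\sigma = 1 - b$. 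Writing $\hat b := 1 - \sigma$, the additive error of the cut $\sigma$ therefore equals $\tfrac{1}{2}\, d_H(\hat b, b)$, where $d_H$ is Hamming distance.

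\textbf{Reduction.} Flipping one bit of $b$ alters only the weight of the single edge $(s, v_i)$, and by exactly $1$, so under the paper's edge-DP neighbor relation the graphs $G_b$ and $G_{b'}$ are neighbors iff $d_H(b, b') = 1$. Hence any $(\epsilon,\delta)$-DP algorithm $\mathcal{A}$ for min $s$-$t$ cut yields an $(\epsilon,\delta)$-DP algorithm $\mathcal{B}(b) := 1 - \mathcal{A}(G_b)$ for bit-vector release under bit-flip neighbors on $\{0,1\}^{n-2}$. Fixing any coordinate $i$ and conditioning on $b_{-i}$, I will let $p, q$ denote the probabilities that the $i$-th output coordinate of $\mathcal{B}$ equals $1$ when $b_i = 0, 1$ respectively. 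The four DP constraints (obtained from the output events $\{1\}$ and $\{0\}$ in both directions),
\[
p \le e^\epsilon q + \delta, \quad q \le e^\epsilon p + \delta, \quad 1-p \le e^\epsilon(1-q) + \delta, \quad 1-q \le e^\epsilon(1-p) + \delta,
\]
cut out a polytope on which a small LP calculation shows that the minimum of $\tfrac{1}{2}(p + 1 - q)$ equals $(1-\delta)/(1+e^\epsilon)$ (tight, attained by optimal randomized response at $(p,q) = ((1-\delta)/(1+e^\epsilon),\,(e^\epsilon+\delta)/(1+e^\epsilon))$). Summing over $i$ and taking expectation over $b$ uniform on $\{0,1\}^{n-2}$ will give
\[
\mathbb{E}\bigl[d_H(\mathcal{B}(b), b)\bigr] \;\ge\; \frac{(n-2)(1 - \delta)}{1 + e^\epsilon},
\]
so there exists $b$ with expected additive error at least $(n-2)(1-\delta)/[2(1+e^\epsilon)]$ for $\mathcal{A}$. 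For $\epsilon \le 1,\ \delta \le 0.1$ this evaluates to at least $0.12(n-2)$, which exceeds $n/20$ once $n \ge 4$ (the theorem is trivially vacuous for smaller $n$).

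\textbf{Main obstacle.} The critical quantitative step is obtaining the tight per-coordinate lower bound $(1-\delta)/(1+e^\epsilon)$: a naive LP that drops the complementary-output-set constraints yields only $\Omega(1/e^\epsilon)$, which---combined with the factor-$1/2$ loss from the cost formula---does not reach $n/20$ at $\epsilon = 1$. A secondary design point is the deliberate choice to make $w(v_i, t)$ independent of $b$ (rather than symmetrically tying it to $b_i$), so that a single bit-flip perturbs exactly one edge weight and no extra factor of $2$ enters through group privacy. Finally, one should note that any cut that removes strictly more than the partition-induced edge set can only have larger cost on $G_b$, so it is without loss of generality to assume the algorithm outputs a partition, i.e., a $\sigma \in \{0,1\}^{n-2}$.
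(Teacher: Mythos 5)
Your proof is correct, and it follows the same high-level plan as the paper---a family of $2^{\Theta(n)}$ instances in which each non-terminal vertex encodes one private bit by its attachment to $s$ or $t$, so that DP forces a constant per-bit reconstruction error---but the execution is genuinely different in two ways. First, the paper uses unweighted graphs where $v_i$ is connected to exactly one terminal; flipping a bit then changes \emph{two} edges, so the paper pays a group-privacy factor $e^{2\epsilon}$ and closes the argument with a global averaging/contradiction computation that only works for $n>30$. Your weighted gadget ($w(s,v_i)=1+b_i$, $w(v_i,t)=3/2$) makes a bit flip change a single edge weight by exactly $1$, so neighboring bit vectors give genuinely neighboring graphs, and you can instead invoke the tight one-bit randomized-response bound $\tfrac12(p+1-q)\ge(1-\delta)/(1+e^\epsilon)$ per coordinate (your observation that you need both output events in both directions, not just the naive $e^{-\epsilon}$ bound, is exactly the right quantitative point); this yields the theorem already for $n\ge 4$ with cleaner constants. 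The trade-off is that your hard instances are weighted, so unlike the paper's proof yours does not establish the remark made after the theorem that the lower bound holds even for unweighted inputs. Two small presentational fixes: the reduction from an output cut \emph{set} to a partition should be phrased as post-processing (decode the side of each $v_i$ from reachability from $s$ after deleting the output edges; the induced partition cut is a subset of the output, hence no costlier, and post-processing preserves DP), not as a ``without loss of generality'' assumption on the algorithm; and when quoting the per-coordinate LP it is worth saying explicitly that dropping the two non-binding constraints only relaxes the feasible region, so the stated minimum is a valid lower bound even without verifying tightness.
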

Note that \cref{thm:lb} holds regardless of the graph being weighted or unweighted. Given an unweighted graph $G$, let the cut \emph{$C_s(G)$} be the $s$-$t$ cut where node $s$ is in one partition and all the other nodes are in the other partition. The algorithm that always outputs $C_s(G)$ is private, and has the additive error of $O(n)$ since the value of $C_s(G)$ is at most $n-1$ in unweighted graphs. Thus one might conclude that we do not need \cref{thm:dpstcut} for a private algorithm, as one cannot hope to do better than $O(n)$ additive error by \cref{thm:lb}.
However, this argument fails for weighted graphs when $C_s(G)$ has a very large value due to weighted edge incident to $s$, or when a graph has parallel unweighted edges.
In \cref{sec:exp}, we show an application in which one computes a weighted min $s$-$t$ cut while as the input receiving an unweighted graph.
We further evaluate our theoretical results on private min $s$-$t$ cut in \cref{sec:exp} and show that despite our additive approximation, our approach outputs a cut with the value fairly close to the min $s$-$t$ cut. 
Moreover, we show that it is, in fact, more accurate than more natural heuristics (such as outputting $C_s(G)$) for a wide range of the privacy parameter $\epsilon$.

\cref{thm:dpstcut} and \cref{thm:lb} depict an interesting comparison to differentially private min cut algorithms. Gupta et al.~\cite{gupta2010differentially} provide an $\epsilon$-private algorithm with $O(\log{n}/\epsilon)$ additive error for min cut and shows that any $\epsilon$-private algorithm requires $\Omega{(\log n)}$ additive error. The large gap between $\epsilon$-private algorithms for the min cut and min $s$-$t$ cut could provide another reason one cannot easily extend algorithms for one to the other problem.

While finding (non-private) Min $s$-$t$ cut is polynomially solvable using max-flow algorithms \cite{maxflow1,maxflow2}, the (non-private) Multiway $k$-cut problem is NP-hard for $k\ge 3$ \cite{nphardmultiwaycut} and finding the best approximation algorithm for it has been an active area of research \cite{nphardmultiwaycut,LP_paper,angelidakis2017improved,ugchardness}. The best approximation factor for the Multiway cut problem is $1.2965$ \cite{bestapprox}, and the largest lower bound is $1.20016$ \cite{berczi2020improving,ugchardness}.
From a more practical point of view, there are a few simple $2$-approximation algorithms for multiway $k$-cut, such as greedy splitting \cite{zhao2005greedy}, which split one partition into two partitions in a sequence of $k-1$ steps. Another simple $2$-approximation algorithm by \cite{nphardmultiwaycut} is the following:
(1) if the terminals of the graph are $s_1,\ldots,s_k$, for each $i$, contract all the terminals except $s_i$ into one node, $t_{i}$;
(2) then run min $s$-$t$ cut on this graph with terminals $s_i,t_i$;
(3) finally, output the union of all these $k$ cuts. This algorithm reduces multiway $k$-cut into $k$ instances of min $s$-$t$ cut.
Hence, it is easy to see that if we use a $\epsilon$-private min $s$-$t$ cut with additive error $r$, we get a $k\epsilon$-private multiway $k$-cut algorithm with additive error $kr$ and multiplicative error $2$. In a similar way, one can make the greedy splitting algorithm private with the same error guarantees. Using \emph{advanced composition} \cite{dwork2010boosting,dwork2014algorithmic} can further reduce this dependency polynomially at the expense of private parameters.
We design an algorithm that reduces the dependency on $k$ of the private multiway cut algorithm to $\log k$, which is exponentially smaller dependence than applying the advanced composition theorem. We obtain this result by developing a novel $2$-approximation algorithm for multiway $k$-cut that reduces this problem to $\log{k}$ instances of min $s$-$t$ cut. Our result is represented in \cref{thm:dpmultiway}. 
We note that the running time of the Algorithm of \cref{thm:dpmultiway} is at most $O(\log{k})$ times the private min $s$-$t$ cut algorithm of \cref{thm:dpstcut}.  
\begin{restatable}{theorem}{dpmultiway}\label{thm:dpmultiway}
    For any $\epsilon>0$, there exists an $(\epsilon,0)$-private algorithm for multiway $k$-cut  on weighted undirected graphs that with probability $1 - 1/n$ at least returns a solution with value $2\cdot OPT+O(n\log{k}/\epsilon)$, where $OPT$ is the value of the optimal non-private multiway $k$-cut. 
\end{restatable}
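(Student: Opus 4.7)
The plan is to design a divide-and-conquer $2$-approximation for multiway $k$-cut that issues min $s$-$t$ cut queries only on $O(\log k)$ recursion \emph{levels}, and to leverage parallel composition across the subproblems of each level. Given the current subgraph $H$ and terminal set $T$, the algorithm partitions $T$ into two balanced halves $T_A, T_B$, contracts each half into a single supernode, calls the private min $s$-$t$ cut of \cref{thm:dpstcut} on $H$ with the two supernodes as terminals, and then recurses on the two resulting edge-disjoint subgraphs $H[V_A], H[V_B]$ with terminals $T_A, T_B$, respectively. The output multiway cut is the union of all cuts produced, and the recursion tree has depth $\lceil \log_2 k \rceil$.

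To argue the $2$-approximation, I fix an optimal multiway cut $C^{\ast}$ with induced partition $V_1^{\ast}, \ldots, V_k^{\ast}$ and observe that at any subproblem with subgraph $H$ and split $T_A, T_B$, the pair $\bigl(\bigcup_{s_i \in T_A} V_i^{\ast}\bigr) \cap V(H)$ versus $\bigl(\bigcup_{s_i \in T_B} V_i^{\ast}\bigr) \cap V(H)$ is a valid $T_A$--$T_B$ separation of $H$; hence the min cut computed at that subproblem has weight at most the portion of $w(C^{\ast})$ crossing this split inside $H$. A charging argument that assigns each edge of $C^{\ast}$ to at most two subproblems---indexed by the two endpoints' home terminals in $C^{\ast}$---then yields total cut weight at most $2\cdot w(C^{\ast})$. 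Making this charging uniform across recursion levels is the main technical obstacle and will likely require an uncrossing-style argument in the spirit of the Li--Panigrahi isolating-cut lemma.

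For privacy, the key structural fact is that subproblems at a single recursion level operate on pairwise \emph{edge-disjoint} subgraphs: the previous level's cut has partitioned each parent's vertex set into two disjoint pieces, so the two child subproblems are induced on disjoint vertex sets. Consequently a single-edge change in the input affects at most one subproblem per level, and parallel composition makes each level $\eps'$-DP whenever every individual subproblem is $\eps'$-DP. Composing the $O(\log k)$ levels by basic composition yields overall $(\eps'\log k, 0)$-DP, so setting $\eps' = \eps/\log k$ gives the claimed $(\eps,0)$-privacy.

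Finally, \cref{thm:dpstcut} guarantees that each subproblem of size $n_i$ contributes additive error $O(n_i/\eps')$ with high probability; since subproblems within a level have disjoint vertex sets, $\sum_i n_i \le n$, so the per-level aggregate is $O(n/\eps')$. A careful edge-level accounting that charges each output edge to the unique subproblem producing it, together with $\eps' = \Theta(\eps/\log k)$ and a union bound over the $O(\log k)$ levels, then yields the stated total additive error $O(n\log k/\eps)$ with probability at least $1 - 1/n$.
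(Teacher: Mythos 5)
Your algorithm is the paper's: split the terminals into two balanced halves, contract each half into a supernode, run private min $s$-$t$ cut, recurse on the two sides, and note that (i) each recursion level operates on vertex-disjoint subgraphs so a single-edge change touches one subproblem per level, and (ii) composing the $O(\log k)$ levels by basic composition gives the privacy bound. That part is essentially identical to the paper (the paper phrases the per-level argument as ``run one min $s$-$t$ cut on the disjoint union of the level's subgraphs'' rather than invoking parallel composition, but it is the same observation), and your error accounting mirrors theirs.

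The genuine gap is in the $2$-approximation, and it sits exactly where you flag it. The specific charging you sketch does not go through below the first level of recursion: at a subproblem $H$ with terminal set $T_A$, the sets $\bigl(\bigcup_{s_i \in T_A} V_i^{\ast}\bigr) \cap V(H)$ and $\bigl(\bigcup_{s_i \in T_B} V_i^{\ast}\bigr) \cap V(H)$ do \emph{not} partition $V(H)$, because $V(H)$ may contain nodes that OPT assigns to a terminal whose supernode ended up on the other side of the parent cut. Those ``orphaned'' vertices have no $T_A$-versus-$T_B$ home, so your claimed separation is not a cut, and the ``each OPT edge charged at most twice'' bookkeeping does not close across levels. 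The paper resolves this precisely by strengthening the inductive hypothesis (\cref{thm:approx-multiway}): it proves the bound for an arbitrary \emph{partial} multiway cut $V_1,\dots,V_k$ (not required to cover $V$), obtaining a guarantee of the form $\sum_i \delta(V_i) + w(E(\overline{V})) + O(\log k \cdot e(n))$ with an extra $w(E(\overline{V}))$ term that pays for the edges among orphaned vertices, and then does an explicit uncrossing of the algorithm's cut $(A,B)$ against the OPT-induced split $(A',B')$, \cref{eq:mincut_property}--\cref{eq:u2}, to thread the recursion. The isolating-cut lemma is a different tool; the paper's argument is a self-contained submodularity-style exchange, and you would still need the partial-cut strengthening even if you used an isolating-cut primitive, so naming it ``uncrossing-style'' is fair but the crucial missing ingredient is the relaxed inductive invariant rather than the uncrossing inequality itself.

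One arithmetic point to double-check when you flesh this out: with per-level aggregate $O(n/\eps')$ over $O(\log k)$ levels and $\eps' = \Theta(\eps/\log k)$, your own accounting gives $O(n\log^2 k / \eps)$, not $O(n\log k/\eps)$; make sure to reconcile the privacy rescaling with the final additive error.
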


Finally, we emphasize that all our private algorithms output cuts instead of only their value. For outputting a number $x$ privately, the standard approach is to add Laplace noise $\Lap(1/\eps)$ to get an $\eps$-private algorithm with additive error at most $O(1/\eps)$ with high probability.

\paragraph{Additional implications of our results.}
First, the algorithm for \cref{thm:dpstcut}, i.e., \cref{alg:DP-min-cut}, uses a non-private min $s$-$t$ cut algorithm in a black-box manner. Hence, \cref{alg:DP-min-cut} essentially translates to any computation setting without asymptotically increasing complexities compared to non-private algorithms.
Second, \cref{alg:multiwaycut} creates (recursive) graph partitions such that each vertex and edge appears in $O(\log k)$ partitions. To see how it differs from other methods, observe that the standard splitting algorithm performs $k-1$ splits, where some vertices appear in each of the split computations. There are also LP-based algorithms for the multiway $k$-cut problem, but to the best of our knowledge, they are computationally more demanding than the $2$-approximate greedy splitting approach.
Consequently, in the centralized and parallel settings such as PRAM, in terms of total work, \cref{alg:multiwaycut} depends only logarithmically on $k$ while the popular greedy splitting algorithm has a linear dependence on $k$. To the best of our knowledge, no existing method matches these guarantees of \cref{alg:multiwaycut}.

\section{Preliminaries}\label{sec:prelim}
%\mina{please review the definitions below}
In this section, we provide definitions used in the paper. We say that an algorithm $\mathcal{A}$ outputs an $(\alpha,\beta)$-approximation to a minimization problem $P$ for $\alpha\ge 1,\beta\ge 0$, if on any input instance $I$, we have $OPT(I) \le \alpha\mathcal{A}(I)+\beta$ where $\mathcal{A}(I)$ is the output of $\mathcal{A}$ on input $I$ and $OPT(I)$ is the solution to problem $P$ on input $I$.
We refer to $\alpha$ and $\beta$ as the multiplicative and additive errors respectively.

\paragraph{Graph Cuts} We use $uv$ to refer to the edge between nodes $u$ and $v$. Let $G=(V,E,w_G)$ be a weighted graph with node set $V$, edge set $E$ and $w_G:V^2 \to \R_{\ge 0}$, where for all edges $uv\in E$, $w_G(uv)>0$, and for all $uv\notin E$, $w_G(uv)=0$. 
For any set of edges $C$, let $w_G(C)=\sum_{e\in C}w_G(e)$. If removing $C$ disconnects the graph, we refer to $C$ as a cut set.
When $C$ is a cut set, we refer to $w_G(C)$ as the weight or value of $C$. We drop the subscript $G$ when it is clear from the context. For subsets $A,B\subseteq V$, let $E(A,B)$ be the set of edges $uv=e\in E$ such that $u\in A$ and $v\in B$.

\begin{definition}[Multiway $k$-cut]
    Given $k$ terminals $s_1,\ldots, s_k$, a \emph{multiway $k$-cut} is a set of edges $C\subseteq E$ such that removing the edges in $C$ disconnects all the terminals. More formally, there exists $k$-disjoint node subsets $V_1,\ldots,V_k\in V$ such that $s_i\in V_i$ for all $i$, $\cup_{i=1}^k V_i = V$, and $C=\cup_{i,j\in \{1,\ldots,k\}} E(V_i,V_j)$. The \emph{Multiway $k$-cut problem} asks for a Multiway $k$-cut with the lowest value.
\end{definition}

In our algorithms, we use the notion of \emph{node contractions} which we formally define here.
\newcommand{\tstar}{t^{\star}}
\begin{definition}[Node contractions]
Let $Z\subseteq V$ be a subset of nodes of the graph $G=(V,E,w_G)$. By contracting the nodes in $Z$ into node $\tstar$, we add a node $\tstar$ to the graph and remove all the nodes in $Z$ from the graph. For every $v\in V\setminus Z$, the weight of the edge $\tstar v$ is equal to $\sum_{z\in Z} w_G(vz)$. Note that if none of the nodes in $Z$ has an edge to $v$, then there is no edge from $\tstar$ to $v$.
\end{definition}

\paragraph{Differential Privacy}We formally define neighboring graphs and differential private algorithms.
\begin{definition}[Neighboring Graphs]
Graphs $G=(V,E,w)$ and $G'=(V,E',w')$ are called \emph{neighboring} if there is $uv\in V^2$ such that $|w_G(uv)-w_{G'}(uv)|\le 1$ and for all $u'v'\neq uv$, $u'v'\in V^2$, we have $w_G(u'v')=w_{G'}(u'v')$. 
\end{definition}

\begin{definition}[Differential Privacy\cite{dwork2006differential}]A  (randomized) algorithm $\mathcal{A}$ is $(\epsilon,\delta)$-private (or $(\epsilon,\delta)$-DP) if for any neighboring graphs $G$ and $G'$ and any set of outcomes $O\subset Range(\mathcal{A})$ it holds
$$
\prob{\mathcal{A}(G)\in O} \le e^{\epsilon}\prob{\mathcal{A}(G')\in O}+\delta.
$$
When $\delta=0$, algorithm $\mathcal{A}$ is \emph{pure differentially private}, or $\epsilon$-private.
\end{definition}

 \begin{definition}[Basic composition \cite{dwork2006calibrating, dwork2009differential}]\label{def:basic_comp}
 Let $\epsilon_1,\ldots,\epsilon_t>0$ and $\delta_1,\ldots,\delta_t\ge  0$. If we run $t$ (possibly adaptive) algorithms where the $i$-th algorithm is $(\epsilon_i,\delta_i)$-DP, then the entire algorithm is $(\epsilon_1+\ldots+\epsilon_t,\delta_1+\ldots+\delta_t)$-DP.
\end{definition}

\paragraph{Exponential distribution.}
For $\lambda > 0$, we use $X \sim \Exp(\lambda)$ to denote that a random variable $X$ is drawn from the exponential distribution with parameter $\lambda$. In our proofs, we use the following fact.
\begin{fact}
\label{fact:Exp-distribution}
    Let $X \sim \Exp(\lambda)$ and $z \ge 0$. Then
    %\[
       $\prob{X \ge z} = \exp\rb{-\lambda z}.$
   % \]
\end{fact}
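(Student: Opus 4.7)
The plan is to prove the fact by direct integration against the density of the exponential distribution. Recall that $X \sim \Exp(\lambda)$ means $X$ has probability density function $f_X(x) = \lambda e^{-\lambda x}$ for $x \ge 0$ (and $f_X(x) = 0$ for $x < 0$), under the rate parameterization where $\E{X} = 1/\lambda$. A quick sanity check: the claimed formula gives $\prob{X \ge 0} = 1$, consistent with $X$ being supported on $[0, \infty)$, so the rate parameterization is the one intended by the paper.

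First I would write the survival probability as an integral of the density over $[z, \infty)$, obtaining
\[
\prob{X \ge z} \;=\; \int_z^{\infty} \lambda e^{-\lambda x}\, dx.
\]
Then I would evaluate this using the antiderivative $-e^{-\lambda x}$, which gives $\left[-e^{-\lambda x}\right]_z^{\infty}$. Since $\lambda > 0$, the upper limit contributes $0$, and the lower limit contributes $-(-e^{-\lambda z}) = e^{-\lambda z}$, matching the right-hand side of the claimed identity.

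There is no real obstacle here; the fact is a textbook identity that follows from one line of calculus. An equivalent one-liner would be to invoke that the cumulative distribution function of $\Exp(\lambda)$ is $F_X(z) = 1 - e^{-\lambda z}$ for $z \ge 0$ and to observe that $\prob{X \ge z} = 1 - F_X(z) = e^{-\lambda z}$. The only point requiring a word of care is distinguishing the rate parameterization from the mean parameterization; once that is fixed, the proof is immediate.
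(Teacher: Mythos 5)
Your proof is correct and is the standard one-line calculation; the paper states this as a background fact without proof, and your direct integration of the density $\lambda e^{-\lambda x}$ over $[z,\infty)$ (equivalently, $1-F_X(z)$) is exactly the intended justification, including the correct rate parameterization consistent with the paper's later use of $\prob{Y > 5\log n/\eps} = n^{-5}$ for $Y\sim\Exp(\eps)$.
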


\paragraph{Laplace distribution.}
We use $X \sim \Lap(b)$ to denote that $X$ is a random variable is sampled from the Laplace distribution with parameter $b$.
\begin{fact}
\label{fact:Lap-distribution}
    Let $X \sim \Lap(b)$ and $z > 0$. Then
    \[
        \prob{X > z} = \frac{1}{2} \exp\rb{-\frac{z}{b}}\, \text{ and } \prob{|X| > z} = \exp\rb{-\frac{z}{b}}.
    \]
\end{fact}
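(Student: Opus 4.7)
The plan is to derive both tail bounds directly from the density of the Laplace distribution. Recall that if $X \sim \Lap(b)$, then its density function is $f_X(x) = \frac{1}{2b} \exp(-|x|/b)$ for $x \in \R$. Since this density is symmetric about $0$ and $z > 0$, I can compute $\prob{X > z}$ by a straightforward one-sided integral on the positive half-line, where $|x| = x$.

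First, I would write
\[
    \prob{X > z} \;=\; \int_{z}^{\infty} \frac{1}{2b} \exp\!\rb{-\frac{x}{b}} \, dx,
\]
and evaluate the antiderivative $-\tfrac{1}{2}\exp(-x/b)$ at the endpoints to obtain $\tfrac{1}{2}\exp(-z/b)$, using that the integrand vanishes at infinity. This handles the first identity.

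For the second identity, I would decompose $\prob{|X| > z} = \prob{X > z} + \prob{X < -z}$. By symmetry of the Laplace density, $\prob{X < -z} = \prob{X > z}$, so the two contributions each equal $\tfrac{1}{2}\exp(-z/b)$ from the previous step, and summing gives $\prob{|X| > z} = \exp(-z/b)$. Alternatively, one can observe this is the standard CDF formula $F_{|X|}(z) = 1 - \exp(-z/b)$ and take the complement.

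There is essentially no obstacle here: the entire argument is a textbook computation with the Laplace density, relying only on the symmetry $f_X(x) = f_X(-x)$ and the elementary integral of an exponential. The only thing to be careful about is ensuring $z > 0$ so that the integration lies entirely in the region where $|x| = x$; this hypothesis is given in the statement.
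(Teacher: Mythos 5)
Your proof is correct and is the standard textbook computation; the paper states this as a Fact without providing a proof, so there is nothing to compare against beyond noting that your derivation (integrating the density on the positive half-line, then doubling by symmetry) is exactly how one would establish it.
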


\begin{fact}\label{fact:difference-Exp-follows-Lap}
    Let $X, Y \sim \Exp(\eps)$. Then, the distribution $X - Y$ follows $\Lap(1 / \eps)$.
\end{fact}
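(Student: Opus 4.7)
The plan is to compute the density of $Z = X - Y$ directly by convolution, and verify it matches the density of $\Lap(1/\eps)$. Recall that $\Exp(\eps)$ has density $f(x) = \eps e^{-\eps x}$ on $[0, \infty)$, while $\Lap(1/\eps)$ has density $g(z) = (\eps / 2) e^{-\eps |z|}$ on $\R$. Since $X$ and $Y$ are assumed independent, the density $h$ of $Z = X - Y$ is the convolution
\[
    h(z) = \int_{-\infty}^{\infty} f(z + y) f(y)\, dy,
\]
where we use the fact that the density of $-Y$ evaluated at $y$ equals $f(-(-y)) = f(y)$ for $y \ge 0$.

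First I would treat $z \ge 0$. Here $f(z+y)$ is nonzero iff $y \ge -z$, which is automatic when $y \ge 0$, so the integral is over $y \in [0, \infty)$ and evaluates to
\[
    h(z) = \int_0^\infty \eps e^{-\eps(z+y)} \cdot \eps e^{-\eps y}\, dy = \eps^2 e^{-\eps z} \int_0^\infty e^{-2\eps y}\, dy = \frac{\eps}{2} e^{-\eps z}.
\]
Next I would treat $z < 0$. Now $f(z+y)$ requires $y \ge -z > 0$, so the integral runs over $[-z, \infty)$ and gives
\[
    h(z) = \int_{-z}^\infty \eps e^{-\eps(z+y)} \cdot \eps e^{-\eps y}\, dy = \eps^2 e^{-\eps z} \cdot \frac{e^{2 \eps z}}{2\eps} = \frac{\eps}{2} e^{\eps z}.
\]
Combining both cases, $h(z) = (\eps / 2) e^{-\eps |z|}$, which is precisely the density of $\Lap(1/\eps)$.

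There is no real obstacle here; this is a textbook calculation. The only care needed is keeping the limits of integration correct in the two sign cases for $z$, and being explicit about the independence assumption on $X, Y$ (which the statement implicitly requires and which I would flag at the start of the proof). If desired, an alternative one-line argument uses moment generating functions: $M_X(t) = \eps / (\eps - t)$ for $|t| < \eps$, so $M_{X-Y}(t) = M_X(t) M_X(-t) = \eps^2 / (\eps^2 - t^2)$, which coincides with the MGF of $\Lap(1/\eps)$, and uniqueness of MGFs finishes the proof.
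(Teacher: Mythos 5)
Your proof is correct. The paper states this fact without proof (it is treated as a standard/folklore property of the exponential and Laplace distributions), so there is no authorial argument to compare against; your convolution computation is exactly the textbook verification one would supply, the case split on the sign of $z$ and the integration limits are handled correctly, and the resulting density $\frac{\eps}{2}e^{-\eps\abs{z}}$ matches the paper's density $\fLap^{1/\eps}$. Your remark that independence of $X$ and $Y$ must be assumed is apt — the paper indeed uses the fact only for independent variables (the $X_{s,v}, X_{t,v}$ samples in \cref{alg:DP-min-cut}) — and the MGF argument you sketch is an equally valid alternative.
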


We use $\fLap^b$ to denote the cumulative distribution function of $\Lap(b)$, i.e., 
$\fLap^b(x) = \frac{1}{2 b} \exp\rb{-\frac{\abs{x}}{b}}$.
We use $\FLap^b$ to denote the cumulative distribution function of $\Lap(b)$, i.e., 
\[
    \FLap^b(x) =
    \begin{cases}
        \frac{1}{2} \exp\rb{\frac{x}{b}} & \text{if $x \le 0$} \\
        1 - \frac{1}{2} \exp\rb{-\frac{x}{b}} & \text{otherwise}
    \end{cases}
\]
When it is clear from the context what $b$ is, we will drop it from the superscript.

\begin{claim}\label{claim:ratio}
    For any $t \in \R$ and $\tau \ge 0$ it holds
    \[
        \frac{\FLap^{1/\eps}(t+\tau)}{\FLap^{1/\eps}(t)}\le e^{\tau \epsilon} \text{ and } \frac{\fLap^{1/\eps}(t+\tau)}{\fLap^{1/\eps}(t)}\le e^{\tau \epsilon}.
    \]
\end{claim}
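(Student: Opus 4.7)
The plan is to handle the two inequalities separately, since the PDF claim is immediate while the CDF claim needs a slight argument.

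For the PDF inequality, I would directly compute
\[
    \frac{\fLap^{1/\eps}(t+\tau)}{\fLap^{1/\eps}(t)} = \frac{(\eps/2)\exp(-\eps|t+\tau|)}{(\eps/2)\exp(-\eps|t|)} = \exp\bigl(\eps(|t| - |t+\tau|)\bigr).
\]
By the reverse triangle inequality, $|t+\tau| \ge |t| - |\tau| = |t| - \tau$ since $\tau \ge 0$, so $|t| - |t+\tau| \le \tau$, and the ratio is bounded by $e^{\eps\tau}$.

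For the CDF inequality, the cleanest route is a log-derivative argument: since $\FLap^{1/\eps}$ is positive and differentiable with derivative $\fLap^{1/\eps}$,
\[
    \log \FLap^{1/\eps}(t+\tau) - \log \FLap^{1/\eps}(t) \;=\; \int_0^\tau \frac{\fLap^{1/\eps}(t+s)}{\FLap^{1/\eps}(t+s)}\, ds.
\]
Thus it suffices to show $\fLap^{1/\eps}(x)/\FLap^{1/\eps}(x) \le \eps$ for every $x \in \R$. For $x \le 0$, both numerator and denominator are $(\eps/2)\exp(\eps x)$ up to the factor $\eps$, so the ratio equals exactly $\eps$. For $x > 0$, the ratio is $\eps e^{-\eps x}/(2 - e^{-\eps x})$, which is at most $\eps$ because $e^{-\eps x} \le 1 \le 2 - e^{-\eps x}$. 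Integrating yields $\log(\FLap^{1/\eps}(t+\tau)/\FLap^{1/\eps}(t)) \le \eps\tau$, as desired.

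The main obstacle is the CDF piece, because $\FLap^{1/\eps}$ has a piecewise definition and a naive three-case split (both arguments negative, both nonnegative, or straddling $0$) gets messy (in the straddling case one has to invoke an AM-GM-type inequality). The log-derivative trick sidesteps the case analysis entirely and reduces everything to the uniform bound $\fLap^{1/\eps}/\FLap^{1/\eps} \le \eps$, which is trivial on each of the two pieces. No other tools are needed.
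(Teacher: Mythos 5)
Your proof is correct, but your argument for the CDF inequality differs from the paper's. The PDF part is identical (direct ratio plus the triangle inequality). For the CDF part, you bound the hazard rate, showing $\fLap^{1/\eps}(x)/\FLap^{1/\eps}(x) \le \eps$ pointwise on each of the two pieces of the piecewise CDF and then integrating the log-derivative over $[t, t+\tau]$; this is valid (note $\FLap^{1/\eps}$ is indeed $C^1$, since the Laplace density is continuous at $0$), and it even yields a slightly stronger fact, namely that the Laplace distribution has hazard rate bounded by $\eps$. The paper takes a more direct route that never touches the explicit piecewise formula for $\FLap$: it writes $\FLap(t+\tau) = \int_{-\infty}^{t+\tau} \fLap(x)\,dx$, applies the already-proved PDF bound pointwise in the form $\fLap(x) \le e^{\tau\eps}\fLap(x-\tau)$, and shifts the variable of integration to recover $e^{\tau\eps}\FLap(t)$. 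So the "messy case analysis" you were trying to sidestep does not arise in the paper's argument either; its shift trick is arguably the leaner of the two, needing only the PDF inequality and a change of variables, whereas yours requires verifying the hazard-rate bound on both branches but avoids manipulating the integral of a shifted density.
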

The proof of this claim is standard, and for completeness, we provide it below.
\begin{proof}
    By definition, we have
    \begin{equation}\label{eq:fLap-ratios}
        \frac{\fLap(t+\tau)}{\fLap(t)} = \frac{\frac{\eps}{2} \exp\rb{-\eps \abs{t+\tau}}}{\frac{\eps}{2} \exp\rb{-\eps \abs{t}}} = \exp\rb{-\eps \abs{t+\tau} + \eps \abs{t}} \le \exp\rb{\tau\eps}.
    \end{equation}
    Also by definition, it holds $\FLap(t + \tau) = \int_{-\infty}^{t + \tau}\fLap(x)dx$. Using \cref{eq:fLap-ratios} we derive
    \[
        \FLap(t + \tau) \le \exp\rb{\tau\eps} \int_{-\infty}^{t + \tau}\fLap(x - \tau)dx = \exp(\tau\eps) \int_{-\infty}^{t}\fLap(x)dx = \exp(\tau\eps) \FLap(t).
    \]
\end{proof}

\section{DP Algorithm for Min $s$-$t$ Cut}
In this section, we prove \cref{thm:dpstcut}. %restated below.
%\dpstcut*
Our DP min $s$-$t$ cut algorithm is quite simple and is provided as \cref{alg:DP-min-cut}. The approach simply adds an edge between $s$ and every other node, and an edge between $t$ and every other node. These edges have weight drawn from $\Exp(\eps)$. The challenging part is showing that this simple algorithm actually preserves privacy.
Moreover, when we couple the approximation guarantee of \cref{alg:DP-min-cut} with our lower-bound shown in \inSupp{\cref{sec:lower-bound}}, then up to a $1/\eps$ factor \cref{alg:DP-min-cut} yields optimal approximation guarantee.

\begin{algorithm}
    \Input{A weighted graph $G = (V, E, w)$ \\
        Terminals $s$ and $t$\\
        Privacy Parameter $\eps$
    }
   
    \For{$u \in V \setminus \{s, t\}$} {    
        Sample $X_{s, u} \sim \Exp(\eps)$ and $X_{t, u} \sim \Exp(\eps)$. \label{line:add-edges}

        In $G$ add an edge between $s$ and $u$ with weight $X_{s, u}$, and an edge between $t$ and $u$ with weight $X_{t, u}$.
    }
    
    \Return the min $s$-$t$ cut in (the modified) $G$ \label{line:return-min-st-cut}
    
	\caption{
        Given a weighted graph $G$, this algorithm outputs a DP min $s$-$t$ cut.
    }
    \label{alg:DP-min-cut}
\end{algorithm}

\emph{Remark:} Technically, there might be multiple min $s$-$t$ cuts that can be returned on \cref{line:return-min-st-cut} of \cref{alg:DP-min-cut}. We discuss that, without loss of generality, it can be assumed that there is a unique one. We elaborate on this in \inSupp{\cref{section:multiple-min-st-cuts}}.

\subsection{Differential Privacy Analysis}
Before we dive into DP guarantees of \cref{alg:DP-min-cut}, we state the following claim that we use in our analysis. Its proof is given in \inSupp{\cref{sec:prove-claim-mish-mash}}.

\begin{lemma}\label{lemma:a-mish-mash-of-two-Lap-rv}
Suppose $x$ and $y$ are two independent random variables drawn from $\Lap(1/\eps)$. Let $\alpha, \beta,\gamma$ be three fixed real numbers, and let $\tau \ge 0$. Define
\[
    P(\alpha,\beta,\gamma) \eqdef \prob{x<\alpha,y<\beta,x+y<\gamma}.
\]
Then, it holds that
\[
    1 \le \frac{P(\alpha+\tau,\beta+\tau,\gamma)}{P(\alpha,\beta,\gamma)}\le e^{4\tau\epsilon}.
\]
\end{lemma}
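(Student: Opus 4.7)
The plan is to split the claim into its two inequalities. The lower bound $P(\alpha,\beta,\gamma) \le P(\alpha+\tau,\beta+\tau,\gamma)$ is immediate from monotonicity of probability measure: since $\tau \ge 0$, the event $\{x<\alpha,\, y<\beta,\, x+y<\gamma\}$ is contained in $\{x<\alpha+\tau,\, y<\beta+\tau,\, x+y<\gamma\}$, so no further work is needed there.

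For the upper bound, I would write $P(\alpha+\tau,\beta+\tau,\gamma) = \int_{-\infty}^{\alpha+\tau}\!\int_{-\infty}^{\beta+\tau}\mathbb{1}[x+y<\gamma]\, \fLap(x)\fLap(y)\, dy\, dx$ and apply the change of variables $x \mapsto x+\tau$, $y \mapsto y+\tau$. The outer limits shift back to $\alpha$ and $\beta$, the integrand becomes $\fLap(x+\tau)\fLap(y+\tau)$, and the sum constraint $x+y<\gamma$ becomes $x+y<\gamma-2\tau$. By \cref{claim:ratio}, one has $\fLap(x+\tau) \le e^{\tau\eps}\fLap(x)$ and $\fLap(y+\tau) \le e^{\tau\eps}\fLap(y)$ pointwise, so pulling the factor $e^{2\tau\eps}$ out of the integral yields $P(\alpha+\tau,\beta+\tau,\gamma) \le e^{2\tau\eps}\, P(\alpha,\beta,\gamma-2\tau)$. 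Finally, monotonicity in the third argument gives $P(\alpha,\beta,\gamma-2\tau) \le P(\alpha,\beta,\gamma)$, so we conclude $P(\alpha+\tau,\beta+\tau,\gamma) \le e^{2\tau\eps}P(\alpha,\beta,\gamma) \le e^{4\tau\eps}P(\alpha,\beta,\gamma)$.

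The main subtlety, and what I expect to be the only real obstacle, is the asymmetric behavior of the three constraints under the joint shift: the two one-sided bounds ``undo'' themselves, while the sum constraint $x+y<\gamma$ actually tightens by $2\tau$ rather than loosening. This is resolved cheaply by discarding the unwanted tightening via monotonicity in $\gamma$. Note that the argument actually produces the strictly sharper bound $e^{2\tau\eps}$; the factor of $4$ in the statement leaves slack but is not exploited by this proof.
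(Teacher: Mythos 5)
Your proof is correct, and it takes a genuinely different route from the paper's. The paper proves the upper bound by a case analysis on the position of $\gamma$ relative to $\alpha+\beta$ (namely $\gamma\ge\alpha+\beta+2\tau$, $\gamma<\alpha+\beta$, and the intermediate range), expanding $P$ as an iterated integral involving the CDF, $P(\alpha,\beta,\gamma)=F(\alpha)F(\gamma-\alpha)+\int_{\gamma-\alpha}^{\beta}F(\gamma-y)f(y)\,dy$, and repeatedly applying the CDF half of \cref{claim:ratio}; the three cases yield factors $e^{2\tau\eps}$, $e^{3\tau\eps}$, and $e^{4\tau\eps}$, the worst of which is the stated constant. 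You instead shift coordinates $(x,y)\mapsto(x-\tau,y-\tau)$ in the defining integral, use the density half of \cref{claim:ratio} pointwise to extract $e^{2\tau\eps}$, and then discard the tightened sum constraint via monotonicity, $P(\alpha,\beta,\gamma-2\tau)\le P(\alpha,\beta,\gamma)$. All steps check out (the Jacobian is $1$, the integrand is nonnegative so the pointwise bound passes through the integral, and $P>0$ always since the Laplace density has full support), so your argument is both shorter and uniform over all parameter regimes, and it yields the strictly sharper constant $e^{2\tau\eps}$, which would propagate to a $(2\tau\eps,0)$-DP guarantee in the privacy analysis where the lemma is invoked. What the paper's CDF-based decomposition buys in exchange is mainly an explicit description of how the three constraints interact in each regime, but for the purposes of \cref{lemma:a-mish-mash-of-two-Lap-rv} your argument is a clean improvement rather than a gap.
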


\begin{lemma}
     Let $\tau$ be the edge-weight difference between two neighboring graphs. \cref{alg:DP-min-cut} is $(4 \tau \eps, 0)$-DP.
\end{lemma}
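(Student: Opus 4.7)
The plan is to reduce the analysis to an application of \Cref{lemma:a-mish-mash-of-two-Lap-rv}. The preliminary step is a reparametrization that exposes the algorithm's output as a function of independent Laplace variables. For every partition $(S, \bar S)$ with $s \in S$, $t \in \bar S$, a short rewriting yields
\[
    W(S) = w_G(E(S, \bar S)) + M + \sum_{u \in \bar S \setminus \{t\}} Z_u,
\]
where $Z_u \eqdef X_{s, u} - X_{t, u}$ and $M \eqdef \sum_{u \ne s, t} X_{t, u}$ is constant across partitions. Consequently the output partition depends only on $(Z_u)_{u \ne s, t}$, which by \Cref{fact:difference-Exp-follows-Lap} are i.i.d.\ $\Lap(1/\eps)$ (different indices $u$ involve disjoint $X$-samples, hence independence).

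Next, fix neighboring graphs $G, G'$ that differ in the weight of an edge $uv$ by at most $\tau$, and fix a target partition $(S^\star, \bar S^\star)$. I would condition on all $Z_w$ with $w \notin \{u, v\}$, which collapses the remaining quantity $w_G(E(S, \bar S)) + \sum_{w \in \bar S \setminus \{t, u, v\}} Z_w$ into a deterministic function $f_1(S)$. Under this conditioning, partitions split into four \emph{types} according to whether $u$ and $v$ belong to $S$ or $\bar S$. The event ``$S^\star$ is the strict minimizer of $W$'' then rewrites as a conjunction of inequalities involving $Z_u$, $Z_v$, and $Z_u + Z_v$ (one from the comparison with the best representative of each other type), plus side conditions that depend only on the conditioned variables.

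The key observation is that only the $f_1$-values for types that \emph{separate} $u$ from $v$ (and thus have $w_G(uv)$ in the original cut) depend on $w_G(uv)$. Case-by-case bookkeeping confirms that shifting $w_G(uv)$ by $\tau$ shifts exactly the two thresholds attached to the ``$Z_u$ alone'' and ``$Z_v$ alone'' inequalities by $\pm \tau$, while the threshold attached to the ``$Z_u + Z_v$'' inequality is unaffected. After a possible sign flip $Z_u \to -Z_u$ or $Z_v \to -Z_v$ (which preserves the Laplace distribution by symmetry) to bring each case into the canonical form $\{x < \alpha, y < \beta, x + y < \gamma\}$, \Cref{lemma:a-mish-mash-of-two-Lap-rv} bounds the ratio of the conditional probabilities by $e^{4\tau\eps}$, and integrating over the conditioning preserves this multiplicative bound. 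Boundary cases where $u$ or $v$ is a terminal leave only a single free $Z$-variable and reduce to the one-variable estimate \Cref{claim:ratio}, which gives a strictly smaller factor $e^{\tau\eps}$; the case where the changed edge is $(s, t)$ itself is trivial since its weight shifts every $s$-$t$ cut by the same amount.

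The main obstacle will be the four-way case analysis: for each of ``$u, v \in S^\star$'', ``$u, v \in \bar S^\star$'', ``$u \in S^\star, v \in \bar S^\star$'', and ``$v \in S^\star, u \in \bar S^\star$'', one must verify that (a) the optimality event reduces precisely to the canonical form of \Cref{lemma:a-mish-mash-of-two-Lap-rv} after an appropriate sign substitution, and (b) the shifts induced by changing $w_G(uv)$ fall only on the $\alpha, \beta$ thresholds and not on $\gamma$. Everything else is routine, including the justification that ties among distinct partitions occur with measure zero under continuous noise, already flagged by the remark following \cref{alg:DP-min-cut}.
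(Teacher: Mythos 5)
Your proposal is correct and follows essentially the same approach as the paper: condition on all noise except that attached to $u$ and $v$, observe that the minimizer must come from one of four canonical cuts indexed by how $u,v$ are split relative to $s$, and apply \cref{lemma:a-mish-mash-of-two-Lap-rv} to the resulting three-inequality event. Your upfront reparametrization to the i.i.d.\ Laplace variables $Z_u = X_{s,u} - X_{t,u}$ (rather than carrying the raw exponentials into the final substitution, as the paper does) is a cosmetic streamlining; one small wording fix is that in the case $u,v \in S^\star$ you must flip \emph{both} $Z_u$ and $Z_v$, so the phrase should read ``$Z_u \to -Z_u$ and/or $Z_v \to -Z_v$.''
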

\begin{proof}
    Let $G$ and $G'$ be two neighboring graphs.
    Let edge $e = uv$ be the one for which $w_e$ in $G$ and $G'$ differ; recall that it differs by at most $\tau$. To analyze the probability with which \cref{alg:DP-min-cut} outputs the same cut for input $G$ as it outputs for input $G'$, we first sample all the $X_{s, x}$ and $X_{t, x}$ for $x \in V \setminus \{s, t, u, v\}$.
    Intuitively, the outcomes of those random variables are not crucial for the difference between outputs of \cref{alg:DP-min-cut} invoked on $G$ and $G'$. We now elaborate on that.

    Consider \textbf{all} the $s$-$t$ cuts -- not only the minimum ones -- in $G$ before $X_{s, u}$, $X_{s, v}$, $X_{t, u}$ and $X_{t, v}$ are sampled; for instance, assume for a moment that those four random variables equal $0$. 
    Let $\allcuts(G)$ be all those cuts sorted in a non-decreasing order with respect to their weight.
    Observe that $\allcuts(G) = 2^{n - 2}$. As we will show next, although there are exponentially many cuts, we will point to only four of them as crucial for our analysis. This will come in very handy in the rest of this proof.
    We now partition $\allcuts(G)$ into four groups based on which $u$ and $v$ are on the same side of the cut as $s$. Let $\cC_u(G)$ be the subset of $\allcuts(G)$ for which $u$ is on the same side of a cut as $s$ while $v$ is on the same side of the cut as $t$. Analogously we define $\cC_v(G)$. We use $\cC_{u, v}(G)$ to represent the subset of $\allcuts(G)$ for which $s$, $u$, and $v$ are all on the same side. Finally, by $\cC_{\emptyset}(G)$ we refer to the subset of $\allcuts(G)$ for which $t$, $u$, and $v$ are all on the same side.

    Let $C_u$, $C_v$, $C_{u,v}$, and $C_\emptyset$ be the min cuts in $\cC_u(G)$, $\cC_v(G)$, $\cC_{u, v}(G)$, and $\cC_\emptyset(G)$, respectively, before $X_{s, u}$, $X_{s, v}$, $X_{t, u}$ and $X_{t, v}$ are sampled.
    It is an easy observation that sampling $X_{s, u}$, $X_{s, v}$, $X_{t, u}$ and $X_{t, v}$ and altering the weight of the edge $e$ changes the weight of all the cuts in $\cC_u(G)$ by the same amount. This same observation also holds for $\cC_v(G)$, $\cC_{u, v}(G)$ and $\cC_{\emptyset}(G)$. 
    This further implies that the min $s$-$t$ cut of $G$ after $X_{s, u}$, $X_{s, v}$, $X_{t, u}$ and $X_{t, v}$ are sampled will be among $C_u$, $C_v$, $C_{u,v}$, and $C_\emptyset$.
    Observe that these four cuts are also the minimum-weight cuts in $\cC_u(G')$, $\cC_v(G')$, $\cC_{u, v}(G')$, and $\cC_{\emptyset}(G')$, respectively.

    In other words, the min $s$-$t$ cuts in $G$ and in $G'$ are among $C_u$, $C_v$, $C_{u,v}$ and $C_\emptyset$, but not necessarily the same; this holds both before and after sampling $X_{s, u}$, $X_{s, v}$, $X_{t, u}$ and $X_{t, v}$. In the rest of this proof, we show that sampling $X_{s, u}$, $X_{s, v}$, $X_{t, u}$ and $X_{t, v}$ makes it likely that the min $s$-$t$ cuts in $G$ and $G'$ are the same cut.

    \paragraph{The ratio of probabilities in $G'$ and $G$ that $C_u$ is the min $s$-$t$ cut.}
    Simply, our goal is to compute
    \begin{equation}
        \frac
        {\prob{w_{G'}(C_u) < w_{G'}(C_v)\ \wedge\ w_{G'}(C_u) < w_{G'}(C_{u,v})\ \wedge\ w_{G'}(C_u) < w_{G'}(C_{\emptyset})}}
        {\prob{w_G(C_u) < w_G(C_v)\ \wedge\ w_G(C_u) < w_G(C_{u,v})\ \wedge\ w_G(C_u) < w_G(C_{\emptyset})}}. \label{eq:huge-ratio}
    \end{equation}

    First consider the case $w_{G}(e) = w_{G'}(e) + \tau$, for some $\tau > 0$. 
    For the ease of calculation, for $x \in \{\emptyset, \{u, v\}\}$, define $\Delta_x = w_G(C_u) - w_G(C_x)  - \tau$ and define $\Delta_v = w_G(C_u) - w_G(C_v)$ \emph{before} $X_{s, u}$, $X_{s, v}$, $X_{t, u}$ and $X_{t, v}$ are sampled. In the rest of this proof, we are analyzing the behavior of cuts when $X_{s, u}$, $X_{s, v}$, $X_{t, u}$ and $X_{t, v}$ {\bf are sampled}.
    Then, we have
    \begin{align}
          & \prob{w_G(C_u) < w_G(C_v)\ \wedge\ w_G(C_u) < w_G(C_{u,v})\ \wedge\ w_G(C_u) < w_G(C_{\emptyset})} \nonumber \\
        = & \Pr[\Delta_v + X_{s,v} + X_{t,u} < X_{t, v} + X_{s, u} \nonumber \\
        & \ \wedge\ \Delta_{u, v} + X_{s, v} + X_{t, u} + \tau < X_{t, v} + X_{t, u}\ \wedge\ \Delta_{\emptyset} + X_{s, v} + X_{t, u} + \tau < X_{s, v} + X_{s, u}] \nonumber \\
        = & \Pr[\Delta_v + X_{s,v} + X_{t,u} < X_{t, v} + X_{s, u} \nonumber \\
        & \ \wedge\ \Delta_{u, v} + X_{s, v} + \tau < X_{t, v} \ \wedge\ \Delta_{\emptyset} + X_{t, u} + \tau < X_{s, u}].\label{eq:w_H(C_u)-upper-bound}
    \end{align}
    Now, we replace $X_{s,v} - X_{t,v}$ by $x$ and $X_{t,u} - X_{s,u}$ by $y$.
    Then, \cref{eq:w_H(C_u)-upper-bound} can be rewritten as
    \begin{align*}
        & \prob{w_G(C_u) < w_G(C_v)\ \wedge\ w_G(C_u) < w_G(C_{u,v})\ \wedge\ w_G(C_u) < w_G(C_{\emptyset})} \\
        = & \prob{x + y < -\Delta_v\ \wedge\ x < -\Delta_{u, v} - \tau \ \wedge\ y < -\Delta_\emptyset - \tau}.
    \end{align*}
    Applying the same analysis for $G'$ we derive
    \begin{align*}
    & \prob{w_{G’}(C_u) < w_{G’}(C_v)\ \wedge\ w_{G’}(C_u) < w_{G’}(C_{u,v})\ \wedge\ w_{G’}(C_u) < w_{G’}(C_{\emptyset})} \nonumber \\
    = & \Pr[\Delta_v + X_{s,v} + X_{t,u} < X_{t, v} + X_{s, u}\\
    & \ \wedge\ \Delta_{u, v} + X_{s, v} + X_{t, u} < X_{t, v} + X_{t, u}\ \wedge\ \Delta_{\emptyset} + X_{s, v} + X_{t, u} < X_{s, v} + X_{s, u}] \\
    = & \Pr[\Delta_v + X_{s,v} + X_{t,u} < X_{t, v} + X_{s, u} \ \wedge\ \Delta_{u, v} + X_{s, v}  < X_{t, v} \ \wedge\ \Delta_{\emptyset} + X_{t, u} < X_{s, u}].
    \end{align*}
    Replacing $X_{s,v} - X_{t,v}$ by $x$ and $X_{t,u} - X_{s,u}$ by $y$ yields
    \begin{align*}
        & \prob{w_{G'}(C_u) < w_{G'}(C_v)\ \wedge\ w_{G'}(C_u) < w_{G'}(C_{u,v})\ \wedge\ w_{G'}(C_u) < w_{G'}(C_{\emptyset})} \\
        = & \prob{x + y < -\Delta_v\ \wedge\ x < -\Delta_{u, v} \ \wedge\ y < -\Delta_\emptyset}.
    \end{align*}
    Note that by \cref{fact:difference-Exp-follows-Lap} we have that $x$ and $y$ follow $\Lap(1/\eps)$. Moreover, observe that the random variables $x$ and $y$ are independent by definition. Therefore, by invoking \cref{lemma:a-mish-mash-of-two-Lap-rv} with $\alpha = -\Delta_{u,v} - \tau$, $\beta = -\Delta_\emptyset - \tau$, and $\gamma = -\Delta_v$ we derive
    \[
        \frac
        {\prob{w_{G'}(C_u) < w_{G'}(C_v)\ \wedge\ w_{G'}(C_u) < w_{G'}(C_{u,v})\ \wedge\ w_{G'}(C_u) < w_{G'}(C_{\emptyset})}}
        {\prob{w_G(C_u) < w_G(C_v)\ \wedge\ w_G(C_u) < w_G(C_{u,v})\ \wedge\ w_G(C_u) < w_G(C_{\emptyset})}} \le e^{4 \tau \eps}.
    \]

    Considering the second case $w_{G}(e) + \tau = w_{G'}(e)$, for some $\tau > 0$, we derive that the ratio \cref{eq:huge-ratio} is upper-bounded by $1 \le e^{4 \tau \eps}$.

    \paragraph{The remaining cases.}
        The proof for the remaining case, i.e., analysis when $C_v, C_{u,v}$ and $C_\emptyset$ is the minimum, follows the same steps as the case we have just analyzed.

    \paragraph{Finalizing the proof.}
    It remains to discuss two properties that are needed to complete the proof.
    First, our analysis above is applied for all but $4$ random variables $X$ fixed. Nevertheless, our analysis does not depend on how those random variables are fixed. Therefore, for any fixed cut $C$, summing/integrating over all the possible outcomes of those variables yields
    \[
        \frac{\prob{\text{\cref{alg:DP-min-cut} outputs $C$ given $G$}}}{\prob{\text{\cref{alg:DP-min-cut} outputs $C$ given $G'$}}} \le e^{4 \tau \eps}.
    \]

    Second, our proof shows the ratio of a cut $C$ being the minimum one in $G$ and $G'$. However, the DP definition applies to any set of multiple cuts. It is folklore that in the case of pure DP, i.e., when $\delta = 0$, these two cases are equivalent.
    This completes the analysis.
\end{proof}

\subsection{Approximation Analysis}
Observe that showing that \cref{alg:DP-min-cut} has $O(n / \eps \cdot \log n)$ additive error is straightforward as a random variable drawn from $\Exp(b)$ is upper-bounded by $O(\log n / b)$ whp. \cref{lem:stcut-additive-error} proves the $O(n / \eps)$ bound.
On a very high level, our proof relies on the fact that for a sufficiently large constant $c$, it holds that only \emph{a small fraction} of random variables $X_{i, j}$ exceeds $c/\eps$; this fraction is $e^{-c}$.
\begin{lemma}\label{lem:stcut-additive-error}
    With probability at least $1 - n^{-2}$, \cref{alg:DP-min-cut} outputs a min $s$-$t$ cut with additive error $O(n / \eps)$.
\end{lemma}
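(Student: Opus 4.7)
The plan is to compare the algorithm's output $C_{\text{alg}}$ against a fixed optimal non-private $s$-$t$ cut $C^\star$ of $G$ by evaluating both weights in the modified graph $\tilde G$ produced by \cref{alg:DP-min-cut}. Let $(V_s^\star, V_t^\star)$ be the bipartition induced by $C^\star$ with $s\in V_s^\star$ and $t\in V_t^\star$. For any $s$-$t$ cut $C$ with bipartition $(A,B)$ (where $s\in A$, $t\in B$), the weight in $\tilde G$ decomposes as
\[
w_{\tilde G}(C) \;=\; w_G(C) + \sum_{u\in B\setminus\{t\}} X_{s,u} + \sum_{u\in A\setminus\{s\}} X_{t,u}.
\]
Since $C_{\text{alg}}$ is a minimum $s$-$t$ cut in $\tilde G$, we have $w_{\tilde G}(C_{\text{alg}}) \le w_{\tilde G}(C^\star)$, and discarding the nonnegative $X$-contributions coming from $C_{\text{alg}}$'s sides on the left yields
\[
w_G(C_{\text{alg}}) - w_G(C^\star) \;\le\; \sum_{u\in V_t^\star\setminus\{t\}} X_{s,u} + \sum_{u\in V_s^\star\setminus\{s\}} X_{t,u} \;\le\; S \eqdef \sum_{u\ne s,t}\bigl(X_{s,u}+X_{t,u}\bigr).
\]

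It therefore suffices to show $S = O(n/\eps)$ with probability at least $1-n^{-2}$. Here $S$ is a sum of $2(n-2)$ independent $\Exp(\eps)$ random variables with $\E{S}\le 2n/\eps$. Following the hint in the lemma, I fix a sufficiently large constant $c_0$ and split each variable into its capped part $\min(X_{i,j}, c_0/\eps)$ and its excess above $c_0/\eps$. The capped parts contribute at most $2(n-2)\cdot c_0/\eps = O(n/\eps)$ deterministically. For the excesses, \cref{fact:Exp-distribution} gives $\prob{X_{i,j} > c_0/\eps} = e^{-c_0}$, so a Chernoff bound on these Bernoulli indicators shows that the number $K$ of variables exceeding $c_0/\eps$ is $O(ne^{-c_0})$ with probability $1-n^{-2}$ for a large enough constant $c_0$. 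By memorylessness, each such excess is itself $\Exp(\eps)$, and since $K$ is only a tiny fraction of $n$, their total sum is again $O(n/\eps)$ with high probability --- either by a direct Gamma-tail Chernoff bound on the sum of $K$ exponentials, or by iterating the same thresholding step a constant number of times.

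The key obstacle is precisely this last step: avoiding a spurious $\log n$ factor. A naive union bound forcing every $X_{i,j}\le (c_0 \log n)/\eps$ individually only yields $O((n\log n)/\eps)$ additive error, which is the easy observation already acknowledged in the preamble to the lemma. The thresholding-plus-memorylessness strategy, which exploits that only an $e^{-c_0}$ fraction of variables exceed $c_0/\eps$, is exactly what removes the logarithm and matches the lower bound \cref{thm:lb} up to a $1/\eps$ factor.
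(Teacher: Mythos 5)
Your argument is correct and shares the paper's overall skeleton: both proofs reduce the claim to showing that the total weight added by \cref{alg:DP-min-cut} is $O(n/\eps)$ with probability $1-n^{-2}$, and then compare the output against a fixed optimal cut $C^\star$ of $G$ evaluated in $\tG$ (you make this comparison explicit via $w_G(C_{\mathrm{alg}})\le w_{\tG}(C_{\mathrm{alg}})\le w_{\tG}(C^\star)\le w_G(C^\star)+S$; the paper states it more tersely in its last two sentences). Where you diverge is the concentration step for $S$. The paper partitions the range $[2/\eps,\,5\log n/\eps]$ into $O(\log\log n)$ dyadic buckets, applies a Chernoff bound to the number of variables landing in each bucket, and sums the geometric series; this uses only Bernoulli-sum Chernoff bounds and never needs properties of the exponential beyond its tail. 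You instead cap at a constant threshold $c_0/\eps$, Chernoff-bound the number $K$ of exceedances, and invoke memorylessness plus a Gamma-tail (moment-generating-function) bound for the sum of the $K$ excesses. Your route is arguably shorter — indeed, once you allow a Chernoff/Gamma tail bound for sums of i.i.d.\ $\Exp(\eps)$ variables, you could apply it directly to all $2(n-2)$ of them ($\E{S}\le 2n/\eps$ and $\prob{S\ge 4n/\eps}\le e^{-\Omega(n)}$), making the cap-and-excess decomposition unnecessary; the paper's bucketing buys a proof from more elementary ingredients. One caveat: your fallback of ``iterating the same thresholding step a constant number of times'' does not close the argument on its own, since after any constant number of rounds a $\Theta(e^{-Tc_0})$ fraction of the variables still have unbounded excesses (one would need $\Theta(\log n)$ rounds plus a final step), but this is moot because the Gamma-tail route you offer first is valid.
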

\begin{proof}
    Let $G$ be an input graph to \cref{alg:DP-min-cut} and let $\tG$ be the graph after the edges on \cref{line:add-edges} are added. $\tG$ contains $2(n-1)$ more edges than $G$. This proof shows that with probability at least $1 - n^{-2}$, the total sum of weights of all these edges is $O(n / \eps)$.

    We first provide a brief intuition. For the sake of it, assume $\eps = 1$.
    Whp, each $X_{i, j}$ weighs at most $5 \log n$. Also, consider only those $X_{i, j}$ such that $X_{i, j} \ge 2$; the total sum of those random variables having weight less than $2$ is $O(n)$.
    It is instructive to think of the interval $[2, 5 \log n]$ being partitioned into buckets of the form $[2^i, 2^{i+1})$. Then, the value of each edge added by \cref{alg:DP-min-cut} falls into one of the buckets.
    Now, the task becomes upper-bounding the number $c_i$ of edges in bucket $i$. That is, we let $Y_{s, u}^i = 1$ iff $X_{s, u} \in [2^i, 2^{i+1})$, which results in $c_i = \sum_{u \in V}(Y_{s, u}^i + Y_{t, u}^i)$. Hence, $c_i$ is a sum of 0/1 independent random variables, and we can use the Chernoff bound to argue about its concentration.
    
    There are two cases. If $\E{c_i}$ is more than $O(\log n)$, then $c_i \in O(\E{c_i})$ by the Chernoff bound. If $\E{c_i} \in o(\log n)$, e.g., $\E{c_i} = O(1)$, we can not say that with high probability $c_i \in O(\E{c_i})$. Nevertheless, it still holds $c_i \in O(\log n)$ whp.

    \paragraph{Edges in $E(\tG) \setminus E(G)$ with weights more than $5 \tfrac{\log n}{\eps}$.}
    Let $Y \sim \Exp(\eps)$. By \cref{fact:Exp-distribution}, $\prob{Y > 5 \tfrac{\log n}{\eps}} = \exp\rb{-5 \log n} = n^{-5}$. Since each $X_{s,v}$ and $X_{t, v}$ is drawn from $\Exp(\eps)$, we have that for $n \ge 2$ with probability at least $1 - n^{-3}$ each of the edges in $E(\tG) \setminus E(G)$ has weight at most $5 \tfrac{\log n}{\eps}$.
    
    \paragraph{Edges in $E(\tG) \setminus E(G)$ with weights at most $5 \tfrac{\log n}{\eps}$.}
    Observe that the sum of the weights of all the edges in $E(\tG) \setminus E(G)$ having weight at most $2 / \eps$ is $O(n / \eps)$. Hence, we focus on the edge weights in the interval $\left[\tfrac{2}{\eps}, 5 \tfrac{\log n}{\eps} \right]$. We partition this interval into $O(\log \log n)$ subintervals where each, except potentially the last one, of the form $[2^i / \eps, 2^{i + 1} / \eps)$.

    Let $Y \sim \Exp(\eps)$. We have
    \[
        \prob{Y \in [2^i/\eps, 2^{i + 1}/\eps)} \le \prob{Y \ge 2^i / \eps} = e^{-2^i}.
    \]
    Let $c_i$ be the number of random variables among $X_{s, v}$ and $X_{t, v}$ whose values belong to $[2^i/\eps, 2^{i + 1}/\eps)$. Then, we derive
    \[
        \E{c_i} \le \frac{2(n-1)}{e^{2^i}} \le \frac{2n}{2^{2^i}} \le \frac{2n}{2^{2i}},
    \]
    where to obtain the inequalities, we use that $i \ge 1$.
    By the Chernoff bound, for appropriately set constant $b > 0$, it holds that $c_i \le b \cdot \max\rb{\log n, \frac{2n}{2^{2i}}}$ with probability at least $1 - n^{-5}$. By the union bound, this claim holds for all the $O(\log \log n)$ partitions simultaneously and with probability at least $1 - n^{-4}$. Hence, with probability at least $1 - n^{-4}$, the sum of the edge-weights in $E(\tG) \setminus E(G)$ across all the $O(\log \log n)$ partitions is at most
    \[
        \sum_{i = 1}^{\log\rb{5 \log n}} b \cdot \max\rb{\log n, \frac{2n}{2^{2i}}} \cdot \frac{2^{i + 1}}{\eps} \le \frac{O(\poly \log n)}{\eps} + 2 b \sum_{i \ge 0}\frac{2n}{2^i \cdot \eps} = \poly \log n + 8 b \frac{n}{\eps} \in O\rb{\frac{n}{\eps}},
    \]
    where we used $\max(\log n, 2n / 2^{2i}) \le \log n + 2n / 2^{2i}$.
    By taking the union bound over both cases, we have that with probability at least $1 - n^{-2}$, it holds that the sum of the weights of all edges added to $G$ is $O(n / \eps)$. Hence, with probability $1 - n^{-2}$ at least, the min $s$-$t$ cut in $\tG$ has weight at most the min $s$-$t$ cut in $G$ plus $O(n/\eps)$. 
    This completes our analysis.
\end{proof}

\subsection{Proof of \cref{lemma:a-mish-mash-of-two-Lap-rv}}
\label{sec:prove-claim-mish-mash}
To prove the lower-bound, we observe that if $x < \alpha$ and $y < \beta$, then $x < \alpha + \tau$ and $y < \beta + \tau$ as well. Hence, it trivially holds $P(\alpha+\tau,\beta+\tau,\gamma) \ge P(\alpha,\beta,\gamma)$ and hence 
\[
    \frac{P(\alpha+\tau,\beta+\tau,\gamma)}{P(\alpha,\beta,\gamma)} \ge 1.
\]

We now analyze the upper bound. For the sake of brevity, in the rest of this proof, we use $F$ to denote $\FLap$ and $f$ to denote $\fLap$.
We consider three cases depending on parameters $\alpha,\beta,\gamma$.
\paragraph{Case $\gamma\ge\alpha+\beta+2 \tau$.}
We have $\prob{x+y<\gamma|x<\alpha,y<\beta}=1=\prob{x+y<\gamma|x<\alpha+\tau,y<\beta+\tau}$. So, it holds that 
\begin{align*}
P(\alpha,\beta,\gamma)&=\prob{x+y<\gamma|x<\alpha,y<\beta} \cdot \prob{x<\alpha,y<\beta} \\
&= \prob{x<\alpha,y<\beta}\\
&= F(\alpha) F(\beta)
\end{align*}
Similarly, $P(\alpha+\tau,\beta+\tau,\gamma)= F(\alpha+\tau) F(\beta+\tau)$. Now using \cref{claim:ratio}, we obtain that $\frac{P(\alpha+\tau,\beta+\tau,\gamma)}{P(\alpha,\beta,\gamma)}\le e^{2\tau \epsilon}$.

\paragraph{Case $\gamma<\alpha+\beta$.}
We write $P(\alpha,\beta,\gamma)$ as follows.
\begin{align}
P(\alpha,\beta,\gamma) &= \int_{-\infty}^{\beta} \int_{-\infty}^{\min(\alpha,\gamma-y)} f_x(x|y)dx f_y(y)dy\nonumber \\
&=\int_{-\infty}^{\beta}  F(\min(\alpha,\gamma-y)) f(y)dy \nonumber\\
& = F(\alpha)\int_{-\infty}^{\gamma-\alpha} f(y)dy + \int_{\gamma-\alpha}^{\beta}F(\gamma-y)f(y)dy\nonumber \\
&= F(\alpha)F(\gamma-\alpha) + \int_{\gamma-\alpha}^{\beta}F(\gamma-y)f(y)dy \label{eq:expand}
\end{align}
Similar to \cref{eq:expand} we have 
\begin{equation}\label{eq:expand_plus_one}
    P(\alpha+\tau,\beta+\tau,\gamma) = F(\alpha+\tau)F(\gamma-\alpha-\tau)+ \int_{\gamma-\alpha-\tau}^{\beta+\tau}F(\gamma-y)f(y)dy 
\end{equation}

We rewrite \cref{eq:expand} as follows to obtain a lower bound on $P(\alpha,\beta,\gamma)$.
\begin{align}
    P(\alpha,\beta,\gamma) &= F(\alpha)F(\gamma-\alpha-2 \tau)+\int_{\gamma-\alpha-2 \tau}^{\gamma-\alpha}F(\alpha)f(y)dy+ \int_{\gamma-\alpha}^{\beta}F(\gamma-y)f(y)dy\nonumber \\ 
    & \ge F(\alpha)F(\gamma-\alpha-2 \tau) + e^{-2 \tau\epsilon}\int_{\gamma-\alpha-2\tau}^{\beta}F(\gamma-y)f(y)dy \label{eq:lb}
\end{align}
In obtaining the inequality, we used the fact that if $y\in [\gamma-\alpha-2\tau,\gamma-\alpha]$ then $0\le (\gamma-y)-\alpha\le 2\tau$ and so by \cref{claim:ratio} we have $F(\alpha) \ge e^{-2\tau\epsilon}F(\gamma-y)$.

Now we compare the two terms of \cref{eq:lb} with \cref{eq:expand_plus_one}. By \cref{claim:ratio} we have that $F(\alpha)F(\gamma-\alpha-2\tau)\ge e^{-2\tau\epsilon}F(\alpha+\tau)F(\gamma-\alpha-\tau)$ and $\int_{\gamma-\alpha-2\tau}^{\beta}F(\gamma-y)f(y)dy \ge e^{-\tau\epsilon}\int_{\gamma-\alpha-\tau}^{\beta+\tau}F(\gamma-y)f(y)dy$.
So we have $P(\alpha,\beta,\gamma)\ge e^{-3\tau\epsilon}P(\alpha+\tau,\beta+\tau,\gamma)$.

\paragraph{Case $\alpha+\beta\le \gamma<\alpha+\beta+2\tau$.}
Then
\begin{align}
    P(\alpha,\beta,\gamma) &= \int_{-\infty}^{\beta} \int_{-\infty}^{\min(\alpha,\gamma-y)} f_x(x|y)dx f_y(y)dy\nonumber \\
    &=\int_{-\infty}^{\beta}  F(\min(\alpha,\gamma-y)) f(y)dy \nonumber\\
    & = F(\alpha) F(\beta)\label{eq:min}\\
    & \ge e^{-2\tau\epsilon} F(\alpha + \tau) F(\beta + \tau)\label{eq:first-use-claim}\\
    & = e^{-2 \tau\epsilon} (F(\alpha + \tau) F(\beta - \tau) + F(\alpha + \tau) (F(\beta + \tau) - F(\beta - \tau)) \nonumber \\
    & \ge e^{-4 \tau\epsilon} (F(\alpha + \tau) F(\beta + \tau) + F(\alpha + \tau) (F(\beta + \tau) - F(\beta - \tau)) \label{eq:P-alpha-beta-gamma-last-lb}.
\end{align}
Note that \cref{eq:min} is obtained since for any $y\le \beta$ we have $\alpha\le \gamma-y$. \cref{eq:first-use-claim} and \cref{eq:P-alpha-beta-gamma-last-lb} are both obtained using \cref{claim:ratio}.
One can easily verify that \cref{eq:expand_plus_one} for $P(\alpha+1,\beta+1,\gamma)$ holds in this case as well. Using the fact that $\gamma-\alpha-\tau\le \beta+\gamma$ and $F$ being a non-decreasing function, we further lower-bound \cref{eq:expand_plus_one} as
\begin{align}
    P(\alpha+\tau,\beta+\tau,\gamma) \le & F(\alpha+\tau)F(\beta + \tau)+ \int_{\gamma-\alpha-\tau}^{\beta+\tau}F(\gamma-y)f(y)dy \nonumber \\
    \le & F(\alpha+\tau)F(\beta + \tau)+ F(\alpha + \tau) \int_{\gamma-\alpha-\tau}^{\beta+\tau} f(y)dy \nonumber \\
    = & F(\alpha+\tau)F(\beta + \tau)+ F(\alpha + \tau) (F(\beta + \tau) - F(\gamma - \alpha - \tau)) \nonumber \\
    \le & F(\alpha+\tau)F(\beta + \tau)+ F(\alpha + \tau) (F(\beta + \tau) - F(\beta - \tau)). \label{eq:P-alpha+1-beta+1-gamma-last-lb}
\end{align}
\cref{eq:P-alpha-beta-gamma-last-lb,eq:P-alpha+1-beta+1-gamma-last-lb} conclude the analysis of this case as well.

\subsection{Multiple min $s$-$t$ cuts}
\label{section:multiple-min-st-cuts}
Let $G$ and $G'$ be two neighboring graphs, and let $\tG$ and $\tG'$, respectively, be their modified versions constructed by \cref{alg:DP-min-cut}. \cref{alg:DP-min-cut} outputs a min $s$-$t$ cut in $G$. However, what happens if there are multiple min $s$-$t$ cuts in $\tG$ and the algorithm invoked on \cref{line:return-min-st-cut} breaks ties in a way that depends on whether a specific edge $e$ appears in $G$ or not? If it happens that $e$ is the edge difference between $G$ and $G'$, then such a tie-breaking rule might reveal additional information about $G$ and $G'$. We now outline how this can be bypassed.

Observe that if the random variables $X_{s,u}$ and $X_{t,u}$ were sampled by using infinite bit precision, then with probability $1$ no two cuts would have the same value.
So, consider a more realistic situation where edge-weights are represented by $O(\log n)$ bits, and assume that the least significant bit corresponds to the value $2^{-t}$, for an integer $t \ge 0$. We show how to modify edge-weights by additional $O(\log n)$ bits that have extremely small values but will help obtain a unique min $s$-$t$ cut. Our modification consists of two steps.

\paragraph{First step.} All the bits corresponding to values from $2^{-t-1}$ to $2^{-t-2\log n}$ remain $0$, while those corresponding to larger values remain unchanged. This is done so that even summing across all -- but at most $\binom{n}{2}$ -- edges it holds that no matter what the bits corresponding to values $2^{-t-2\log n - 1}$ and less are, their total value is less than $2^{-t}$. Hence, if the weight of cut $C_1$ is smaller than the weight of cut $C_2$ before the modifications we undertake, then $C_1$ has a smaller weight than $C_2$ after the modifications as well.

\paragraph{Second step.}
We first recall the celebrated Isolation lemma.
\begin{lemma}[Isolation lemma, \cite{mulmuley1987matching}]
\label{lemma:isolation-lemma}
    Let $T$ and $N$ be positive integers, and let $\cF$ be an arbitrary nonempty family of subsets of the universe $\{1, \ldots, T\}$. Suppose each element $x\in \{1, \dots,T\}$ in the universe receives an integer weight $g(x)$, each of which is chosen independently and uniformly at random from $\{1,\dots, N\}$. The weight of a set $S \in \cF$ is defined as $g(S) = \sum_{x \in S} g(x)$.

    Then, with probability at least $1 - T / N$ there is a unique set in $\cF$ that has the minimum weight among all sets of $\cF$.
\end{lemma}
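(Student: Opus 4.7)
The plan is to use the classical principle of deferred decisions together with a union bound. The key definition I would introduce is that an element $x \in \{1, \ldots, T\}$ is \emph{ambiguous} if there exist two minimum-weight sets $S_1, S_2 \in \cF$ with $x \in S_1$ and $x \notin S_2$. The first easy observation to establish is that if there is no unique minimum-weight set in $\cF$, then at least one element must be ambiguous: taking any two distinct minimum-weight sets $S_1 \neq S_2$, every element of their symmetric difference witnesses ambiguity. So it suffices to upper bound the probability that some element is ambiguous.

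Next, I would bound, for each fixed $x$, the probability that $x$ is ambiguous, using deferred decisions on $g(x)$. Condition on the weights $\{g(y) : y \neq x\}$; by independence, $g(x)$ is still uniform on $\{1, \ldots, N\}$. Once we have conditioned, the quantities
\[
    \alpha_x \eqdef \min_{S \in \cF,\, x \in S} \sum_{y \in S,\, y \neq x} g(y), \qquad \beta_x \eqdef \min_{S \in \cF,\, x \notin S} g(S)
\]
are fully determined (with the convention that either is $+\infty$ if the corresponding collection is empty). The minimum-weight set containing $x$ has total weight $\alpha_x + g(x)$, while the minimum-weight set not containing $x$ has weight $\beta_x$. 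Thus $x$ is ambiguous precisely when $\alpha_x + g(x) = \beta_x$, i.e.\ when $g(x)$ hits the single fixed integer $\beta_x - \alpha_x$. Since $g(x)$ is uniform on $\{1, \ldots, N\}$, this happens with probability at most $1/N$, and the bound holds irrespective of the conditioning.

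Finally, I would take a union bound over the $T$ elements: the probability that some $x$ is ambiguous is at most $T/N$, and by the first step this upper bounds the probability that the minimum weight set in $\cF$ is not unique, yielding the claim. The only delicate point is the deferred-decisions step: one has to argue cleanly that $\alpha_x$ and $\beta_x$ depend only on $\{g(y) : y \neq x\}$, so that the event $\{g(x) = \beta_x - \alpha_x\}$ really amounts to a single uniform sample hitting a prescribed target. Everything else is bookkeeping and a union bound.
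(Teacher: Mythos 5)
Your argument is correct: the ambiguity/deferred-decisions/union-bound proof is exactly the classical Mulmuley--Vazirani--Vazirani argument, and the delicate step you flag (that $\alpha_x$ and $\beta_x$ are determined by $\{g(y): y \neq x\}$ alone, so the bad event is $g(x)$ hitting one prescribed value) is handled correctly, including the degenerate cases where one of the two subfamilies is empty. The paper itself does not prove this lemma but imports it verbatim from \cite{mulmuley1987matching}, and your proof coincides with the standard one given there, so there is nothing to reconcile.
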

We now apply \cref{lemma:isolation-lemma} to conclude our modification of the edge weights in $\tG$.
We let the universe $\{1, \ldots, T\}$ from that lemma be the following $2(n-2)$ elements $\cU = \{(s, v)\ |\ v \in V(G) \setminus \{s, t\}\} \cup \{(t, v)\ |\ v \in V(G) \setminus \{s, t\}\}$. Then, we let $\cF$ represent all min $s$-$t$ cuts in $\tG$, i.e., $S \subseteq \cU$ belongs to $\cF$ iff there is a min $s$-$t$ cut $C$ in $\tG$ such that for each $(a, b) \in S$ the cut $C$ contains $X_{a, b}$. So, by letting $N = 2 n^2$, we derive that with probability $1 - 1/n$ it holds that no two cuts represented by $\cF$ have the same minimum value {\bf with respect to $g$} defined in \cref{lemma:isolation-lemma}. To implement $g$ in our modification of weights, we modify the bits of each $X_{s, v}$ and $X_{t, v}$ corresponding to values from $2^{-t-2\log n - 1}$ to $2^{-t-2\log n - \log N}$ to be an integer between $1$ and $N$ chosen uniformly at random.

Only after these modifications, we invoke \cref{line:return-min-st-cut} of \cref{alg:DP-min-cut}. Note that the family of cuts $\cF$ is defined only for the sake of analysis. It is not needed to know it algorithmically.

\section{Lower Bound for min $s$-$t$ cut error}
\label{sec:lower-bound}
In this section, we prove our lower bound. Our high-level idea is similar to that of  \cite{cohen2022near} for proving a lower bound for private algorithms for correlation clustering.

\thmlb*
\begin{proof}
For the sake of contradiction, let $\mathcal{A}$ be a $(\epsilon,\delta)$-differential private algorithm for min $s$-$t$ cut that on any input $n$-node graph outputs an $s$-$t$ cut that has expected additive error of less than $n/20$. We construct a set of $2^n$ graphs $S$ and show that $\mathcal{A}$ cannot have low expected cost on all of the graphs on this set while preserving privacy.

The node set of all the graphs in $S$ are the same and consist of $V=\{s,t, v_1,\ldots,v_n\}$ where $s$ and $t$ are the terminals of the graph and $n>30$.
For any $\tau\in \{0,1\}^n$, let $G_\tau$ be the graph on node set $V$ with the following edges: For any $1\le i\le n$, if $\tau_i=1$, then there is an edge between $s$ and $v_i$. If $\tau=0$, then there is an edge between $t$ and $v_i$. Note that $v_i$ is attached to exactly one of the terminals $s$ and $t$. Moreover, the min $s$-$t$ cut of each graph $G_\tau$ is zero. 

Algorithm $\mathcal{A}$ determines for each $i$ if $v_i$ is on the $s$-side of the output cut or the $t$-side. The contribution of each node $v_i$ to the total error is the number of edges attached to $v_i$ that are in the cut. We denote this random variable in graph $G_\tau$ by $e_\tau(v_i)$. Since there are no edges between any two non-terminal nodes in any of the graphs $G_\tau$, the total error of the output is the sum of these individual errors, i.e., $\sum_{i=0}^n e_\tau(v_i)$. Let $\bar{e}_\tau(v_i)$ be the expected value of $e_\tau(v_i)$ over the outputs of $\mathcal{A}$ given $G_\tau$.

Let $p_\tau^{(i)}$ be the marginal probability that $v_i$ is on the $s$-side of the output $s$-$t$ cut in $G_\tau$. If $\tau_i = 0$, then $v_i$ is connected to $t$ and so $\bar{e}_\tau(v_i) = p_\tau^{(i)}$. If $\tau_i=1$, then $v_i$ is connected to $s$ and so $\bar{e}_\tau(v_i) = 1-p_\tau^{(i)}$. By the assumption that $\mathcal{A}$ has a low expected error on every input, we have that for any $\tau\in \{0,1\}^n$,
\begin{equation}\label{eq:error}
    (n+2)/20 > \sum_{i,\tau_i=0} p_\tau^{(i)}+\sum_{i,\tau_i=1} (1-p_\tau^{(i)})
\end{equation}
Let $S_i$ be the set of $\tau\in \{0,1\}^n$ such that $\tau_i=0$, and $\bar{S}_i$ be the complement of $S_i$, so that $\tau\in \bar{S}_i$ if $\tau_i = 1$. Note that $|S_i|=|\bar{S}_i| = 2^{n-1}$.
Fix some $i$, and for any $\tau\in \{0,1\}^n$, let $\tau'$ be the same as $\tau$ except for the $i$-th entry being different, i.e., for all $j\neq i$, $\tau_j = \tau_j'$, and $\tau_i\neq\tau_i'$. Since $G_\tau$ and $G_{\tau'}$ only differ in two edges, from $\mathcal{A}$ being $(\epsilon,\delta)$-differentially private for any $j$ we have $p_{\tau}^{(j)}\le e^{2\epsilon}\cdot p_{\tau'}^{(j)}+\delta$. So for any $i,j$ we have 
\begin{equation}\label{eq:change-i}
    \sum_{\tau\in S_i} p_{\tau}^{(j)}\le  \sum_{\tau\in \bar{S}_i}(e^{2\epsilon}p_{\tau}^{(j)} +\delta)
\end{equation}
From \cref{eq:error} we have
\begin{align*}
    2^n\cdot 0.05(n+2) &> \sum_{\tau\in \{0,1\}^n} \sum_{i:\tau_i=1} (1-p^{(i)}_\tau) \\
    & =\sum_{i=1}^{n} \sum_{\tau\in S_i}(1-p^{(i)}_\tau)\\
    &\ge \sum_{i=1}^n \sum_{\tau\in\bar{S}_i} (1-[e^{2\epsilon}p^{(i)}_\tau +\delta]) \\
    &=n2^{n-1}(1-\delta) -e^{2\epsilon}\sum_{i=1}^n\sum_{\tau\in \bar{S}_i} p_\tau^{(i)}
\end{align*}
Where the last inequality comes from \cref{eq:change-i}. Using \cref{eq:error} again, we have that $\sum_{i=1}^n\sum_{\tau\in \bar{S}_i} p_\tau^{(i)}<2^n\cdot 0.05(n+2)$, so we have that $2^n\cdot 0.05(n+2)>n2^{n-1}(1-\delta)-e^{2\epsilon}(2^n\cdot 0.05(n+2))$. Dividing by $2^n$ we have $$0.05(n+2)(1+e^{2\epsilon})>n(1-\delta)/2.$$
Now since $\epsilon\le 1$, $\delta\le 0.1$, and $e^{2}<7.4$ we get that 
$$
0.05\cdot 8.4(n+2) >0.45n
$$
Hence, we have $n<28$, which is a contradiction to $n>30$.
\end{proof}
\section{DP algorithm for multiway cut}
In this section, we show our approach for proving \cref{thm:dpmultiway} restated below. 
\dpmultiway*
We first develop an algorithm for the non-private multiway $k$-cut problem and then use it to prove \cref{thm:dpmultiway}. Our algorithm invokes a min $s$-$t$ cut procedure $O(\log{k})$ times.
\subsection{Solving Multiway Cut in $\log{k}$ Rounds of Min $s$-$t$ Cut}

Our new multiway $k$-cut algorithm is presented in \cref{alg:multiwaycut}.
\cref{alg:multiwaycut} first finds a cut that separates $s_1,\ldots,s_{k'}$ from $s_{k'+1},\ldots,s_k$. This separation is obtained by contracting $s_1,\ldots,s_{k'}$ into a single node called $s$, contracting $s_{k'+1},\ldots,s_k$ into a single node called $t$, and then running min $s$-$t$ cut on this new graph. Afterward, each of the two partitions is processed separately by recursion, and the algorithm outputs the union of the outputs on each of the two partitions. 
\begin{algorithm}
    \Input{A weighted graph $G = (V, E, w)$ \\
        Terminals $s_1,\ldots,s_k$\\
        A min $s$-$t$ cut algorithm $\mathcal{A}$ that with probability $1 - \alpha$ is $(1,e(n))$-approximate 
    }
    \If{$k=1$} {
    \Return $G$ as one partition
    }
    % \If{$k=2$}{
    % Run the min $s$-$t$ cut algorithm for $s=s_1$ and $t=s_2$.
    % }
    \Else{
    Let $k' = \floor{\frac{k}{2}}$.\label{line:define-k'}
    
    Let $\tG$ be the graph obtained by contracting $s_1,\ldots,s_{k'}$ into $s$ and contracting $s_{k'+1},\ldots,s_k$ into $t$. \label{step:G_tilda}
    
    Run algorithm $\mathcal{A}$ on $\tG$ with terminals $s,t$ to obtain cut $C$ with partitions $\tG_1$ and $\tG_2$. \label{st_step}
    
    For $i=1,2$, let $G_i$ be the graph obtained from $\tG_i$ by reversing the contraction of terminal nodes.
    
    Let $C_1$ be the output of \cref{alg:multiwaycut} on $G_1$ and $C_2$ be the output of \cref{alg:multiwaycut} on $G_2$.
    
    \Return $C_1\cup C_2\cup C$.
    }
	\caption{
        Given a weighted graph $G$, this algorithm outputs a multiway $k$-cut.
    }
    \label{alg:multiwaycut}
\end{algorithm}

We first show that in each recursion level of the algorithm, we can run the min $s$-$t$ cut step (\cref{st_step}) of all the instances in that level \emph{together} as one min $s$-$t$ cut so that we only run $O(\log{k})$ many min $s$-$t$ cuts.
\begin{lemma}
\label{lemma:reduction-to-log(k)-min-st-cuts}
     \cref{alg:multiwaycut} is a reduction of multiway $k$-cut on $n$-node graphs to $O(\log{k})$ many instances of min $s$-$t$ cut on $O(n)$-node graphs. Moreover, if $T(\mathcal{A},n,m)$ is the running time of $\mathcal{A}$ on an $n$-node $m$-edge graphs, \cref{alg:multiwaycut} runs in $O(\log{k}\cdot [T(\mathcal{A},n)+m])$.
\end{lemma}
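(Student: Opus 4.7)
The plan is to analyze the recursion tree of \cref{alg:multiwaycut} along two axes: its depth, and how the multiple min $s$-$t$ cut invocations at a single level can be batched into one call to $\mathcal{A}$.

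First, the depth bound. By \cref{line:define-k'}, each recursive call receives at most $\ceil{k/2}$ terminals, so the recursion tree bottoms out (at the $k=1$ base case) after $\lceil\log_2 k\rceil = O(\log k)$ levels.

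Second, the batching step. I would fix a recursion depth $d$ and let $H_1,\ldots,H_\ell$ denote the subgraphs passed to the $\ell$ recursive calls at that depth. These subgraphs are pairwise vertex-disjoint: each $H_j$ was obtained as one side of an $s$-$t$ cut performed at depth $d-1$, and by induction sibling subgraphs either lie on opposite sides of a common parent cut or have disjoint ancestors. Within each $H_j$, \cref{alg:multiwaycut} contracts its ``first half'' of terminals into a node $\sigma_j$ and its ``second half'' into a node $\theta_j$ before calling $\mathcal{A}$. To execute all $\ell$ of these cuts with a single call to $\mathcal{A}$, I would build an auxiliary graph $H$ as the disjoint union of the contracted $H_j$'s, augmented with a global super-source $S$ joined to every $\sigma_j$ and a global super-sink $T$ joined to every $\theta_j$, with each such new edge having weight $W \eqdef \sum_{e\in E} w(e) + 1$. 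Any minimum $S$-$T$ cut in $H$ must keep $S$ together with every $\sigma_j$ and $T$ together with every $\theta_j$ (otherwise it would pay at least $W$, which exceeds the total edge weight of $G$), so it restricts to a $\sigma_j$-$\theta_j$ cut in each $H_j$. Since the $H_j$'s are vertex-disjoint, the total weight of the $S$-$T$ cut decomposes as $\sum_j w(C_j)$, and thus minimizing globally is equivalent to minimizing each $C_j$ individually. Because the $H_j$'s are vertex-disjoint subgraphs of $G$, the graph $H$ has $O(n)$ nodes and $O(m)$ edges.

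Combining the two facts, \cref{alg:multiwaycut} can be implemented as $O(\log k)$ sequential calls to $\mathcal{A}$ on $O(n)$-node, $O(m)$-edge graphs, plus $O(n+m)$ bookkeeping work per level to perform the contractions, assemble $H$, and decompose the returned cut into its per-component pieces. Summing over the $O(\log k)$ levels yields the stated $O(\log k\cdot(T(\mathcal{A},n)+m))$ running time. The main subtlety is the batching step: I need to verify that the super-source/super-sink gadget really forces any minimum $S$-$T$ cut of $H$ to decompose as a disjoint union of per-component minimum cuts. This follows cleanly from the weight choice $W > \sum_e w(e)$ together with the vertex-disjointness of the $H_j$'s, so no real technical obstacle arises beyond being careful about this reduction.
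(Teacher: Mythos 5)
Your proof is correct and follows the same overall plan as the paper: bound the recursion depth by $O(\log k)$ and then batch all min $s$-$t$ cut calls at a single level into one invocation of $\mathcal{A}$, using the fact that the level-$d$ subgraphs are pairwise vertex-disjoint so the batched cut decomposes componentwise. The only genuine difference is the merging gadget. The paper simply contracts all the per-component sources $\sigma_1,\ldots,\sigma_\ell$ into one node $s$ and all the sinks $\theta_1,\ldots,\theta_\ell$ into one node $t$; since the components are then connected only through $s$ and $t$, any $s$-$t$ cut of the combined graph restricts to a $\sigma_j$-$\theta_j$ cut in each component and the weights add, so no extra argument is needed and the combined instance has at most $n$ nodes and $m$ edges with weights taken verbatim from $G$. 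Your super-source/super-sink construction with edges of weight $W=\sum_e w(e)+1$ achieves the same thing, and your argument that no minimum $S$-$T$ cut can pay $W$ is sound (there is always a cut avoiding all heavy edges of weight at most $\sum_e w(e) < W$). What the contraction buys over your gadget is that it avoids introducing auxiliary edges whose weight depends on the total weight of the input graph -- a data-dependent quantity that is harmless for this lemma but would be an unwelcome artifact when $\mathcal{A}$ is instantiated with the private algorithm of \cref{alg:DP-min-cut} -- and it keeps the per-level instance slightly smaller; your version instead needs the extra (easy) heavy-edge argument and $O(\ell)$ additional edges, which still fits within the stated $O(\log k\cdot[T(\mathcal{A},n)+m])$ bound.
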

\begin{proof}
Let $G$ be the input graph; suppose it has $n$ nodes. If we consider the recursion tree of the algorithm on $G$, we can perform the min $s$-$t$ cut step (\cref{st_step}) of \emph{all of the subproblems on one level} of the recursion tree by a single min $s$-$t$ cut invocation. To see this, assume that $G_1^{r'},\ldots,G_{r}^{r'}$ are in level $r'$ of the recursion tree for $r=2^{r'-1}$, and for sake of simplicity, assume that $k$ is a power of $2$. Let $s_i,t_i$ be the two terminals in $\tG_i^{r'}$ (defined in \cref{step:G_tilda}) for $i\in \{1,\ldots,r\}$. We can think of the collection of graphs $\tG_1^{r'},\ldots,\tG_r^{r'}$ as one graph 
$G^{r'}$. Contract $s_1,\ldots,s_{r}$ into $s$, and contract $t_1,\ldots, t_{r}$ into $t$, to obtain the graph $\tG^{r'}$ from $G^{r'}$. Then performing a min $s$-$t$ cut algorithm on $\tG^{r'}$ is equivalent to performing min $s$-$t$ cut on each of $G_i^{r'}$, as there is no edge between $G_i^{r'}$ and $G_j^{r'}$ for any $i,j$. Note that since $G_1^{r'},\ldots,G_{r}^{r'}$ are disjoint and are subgraphs of $G$, $\tG^{r'}$ has at most $n$ nodes. Moreover, the node contraction processes in each recursion level take at most $O(m)$ as each edge is scanned at most once.
\end{proof}

To prove that \cref{alg:multiwaycut} outputs a multiway $k$-cut that is a $2$-approximation of the optimal multiway $k$-cut, we first present a few definitions.
Given graph $G$ with node set $V$, a \emph{partial} multiway $k$-cut is a set of disjoint node subsets $V_1,\ldots,V_k$ such that for each $i$, terminal $s_i\in V_i$. Note that $V_1\cup\ldots\cup V_k$ is not necessarily the whole $V$. For a set of nodes $S\subseteq V$, let $\delta_G(S)$ be the sum of the weights of the edges with exactly one endpoint in $S$, i.e., the sum of the weights of the edges in $E(S,V\setminus S)$. Let $w(E(S))$ be the sum of the weights of the edges with both endpoints in $S$. \cref{thm:approx-multiway} is the main result in this section, and we state the proof of it \inSupp.% states the main result in this section.

\begin{theorem}\label{thm:approx-multiway}
    Let $e(n)$ be a convex function and $G=(V,E,w)$ be a weighted graph. Let $\cA$ be an algorithm that with probability $1 - \alpha$ outputs a $(1,e(n))$-approximate min $s$-$t$ cut for $e(n)=cn/\epsilon$ for constants $\epsilon>0$, $c\ge 0$. Then, for any graph  with terminals $s_1,\ldots,s_k$ and any partial multiway $k$-cut $V_1,\ldots, V_k$ of it, \cref{alg:multiwaycut} outputs a multiway cut that with probability $1 - \alpha O(\log k)$ has a value at most $\sum_{i=1}^{k}\delta(V_i)+w(E(\overline{V}))+O(\log{(k)} e(n))$ where $\overline{V} = V\setminus [\cup_{i=1}^k V_i]$.
\end{theorem}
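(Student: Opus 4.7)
My plan is to prove the statement by induction on $k$. The base case $k=1$ is immediate: Algorithm~\ref{alg:multiwaycut} returns the whole graph as a single partition with cut value $0$, while the RHS $\delta(V_1)+w(E(\overline V))$ is non-negative. For the inductive step, let $C$ be the cut returned at the top level on $\tilde G$, and let $G_1,G_2$ be the two resulting subgraphs with $s_1,\ldots,s_{k'}\in V(G_1)$ and $s_{k'+1},\ldots,s_k\in V(G_2)$, so that $n_1+n_2=n$. Setting $A:=\bigcup_{i\le k'}V_i$, the partition $(A,V\setminus A)$ is a valid $s$-$t$ cut in $\tilde G$ of weight at most $\sum_{i\le k'}\delta(V_i)$; hence the $(1,e(n))$-approximation property of $\mathcal A$ yields, with probability $1-\alpha$,
\[
    w(C)\ \le\ \sum_{i\le k'}\delta(V_i)+e(n),
\]
and the symmetric bound with $i>k'$ also holds.

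For each $\ell\in\{1,2\}$ I would apply the inductive hypothesis on $G_\ell$ with the partial multiway cut $U_i:=V_i\cap V(G_\ell)$ for each terminal $s_i$ sitting in $G_\ell$. The uncovered set $\overline U_\ell=V(G_\ell)\setminus\bigcup_i V_i$ then automatically absorbs both the $\overline V$-nodes that land in $G_\ell$ and any nodes of a class $V_j$ that the algorithmic cut placed on the wrong side. By the IH, with probability $1-\alpha\cdot O(\log k')$,
\[
    w(C_\ell)\ \le\ X_\ell+O\!\left(\log(k')\cdot e(n_\ell)\right),\qquad X_\ell:=\sum_i\delta_{G_\ell}(U_i)+w_{G_\ell}(E(\overline U_\ell)).
\]

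The heart of the proof is the deterministic charging inequality
\[
    w(C)+X_1+X_2\ \le\ \sum_{i=1}^k\delta(V_i)+w(E(\overline V))+e(n),
\]
which I would verify by an edge-by-edge accounting. For each edge of $G$, its contribution to the LHS is either $w(e)$ via $w(C)$ (if $C$ splits it) or goes through a $\delta$-term or interior-term of some $X_\ell$, while its contribution to the RHS is $2w(e)$ for inter-class edges, $w(e)$ for class-to-$\overline V$ edges, and $w(e)$ for edges internal to $\overline V$. A case analysis shows that all edge types except those internal to a single class $V_j$ contribute at most as much to the LHS as to the RHS, often leaving slack $w(e)$ (for inter-class edges split by $C$, the RHS pays $2w(e)$ while the LHS pays only $w(e)$). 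The problematic edges are those internal to some $V_j$ that end up split across $C$ or placed entirely on the side opposite to $s_j$; I would charge their total weight to the $e(n)$ slack together with the savings coming from the split inter-class edges, using that $w(C)$ cannot exceed the ideal cut $\sum_{i\le k',j>k'}w(E(V_i,V_j))+\sum_{i\le k'}w(E(V_i,\overline V))$ by more than $e(n)$.

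Combining the three bounds and using convexity of $e$ with $e(0)=0$, which gives $e(n_1)+e(n_2)\le e(n_1+n_2)=e(n)$, the additive error incurred at the current level is $O(e(n))$ and accumulates to $O(\log k\cdot e(n))$ across the $O(\log k)$ recursion levels. A union bound over the (at most) $O(\log k)$ invocations of $\mathcal A$ provides the claimed success probability $1-\alpha\cdot O(\log k)$. The main obstacle is the charging inequality above: quantifying that the weight of the problematic intra-class edges is absorbed by the approximation slack of $\mathcal A$ combined with the slack from split inter-class edges. I expect this to require a careful local-swap argument showing that moving any misplaced node to its correct side cannot decrease $w(C)$ by more than the combined RHS slack it consumes, so that the cumulative effect is bounded by a single $e(n)$.
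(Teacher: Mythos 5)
Your induction skeleton, the superadditivity step $e(n_1)+e(n_2)\le e(n)$, and the per-level error accumulation all match the paper, but the step you yourself identify as the heart of the argument---the deterministic charging inequality $w(C)+X_1+X_2\le\sum_{i=1}^k\delta(V_i)+w(E(\overline V))+e(n)$---is false as stated, and your fallback of charging the problematic edges to the $e(n)$ slack plus the savings from split inter-class edges cannot work. Concretely, take $k=4$, $k'=2$, $V_1=\{s_1\}$, $V_2=\{s_2\}$, $V_4=\{s_4\}$, and $V_3=\{s_3\}\cup K$, where $K$ is a clique of total internal weight $M$, attached to $s_3$ by one unit edge and to $s_1$ by one unit edge, with $\overline V=\emptyset$. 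Both unit edges are min $s$-$t$ cuts of $\tG$, so an exact (or $(1,e(n))$-approximate) algorithm may legitimately place $K$ on the $s$-side. In your application of the inductive hypothesis, $K$ then lands in the uncovered set $\overline U_1$, so $X_1\ge w(E(K))=M$, while the right-hand side is $2+e(n)$; the available ``savings'' are at most the total inter-class weight (here at most $2$), so no charging of this kind can absorb $M$. (The theorem itself still holds in this example---the recursive calls cut nothing further---so what fails is your proposed lemma, not the statement.) The lesson is that edges internal to a misplaced piece of some class $V_j$ must never be paid for at all.

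The paper's proof handles exactly this point differently. The misplaced pieces $U_2\cap A$ (classes $V_j$, $j>k'$, landing on the $s$-side) are not left uncovered in the sub-instance: they are in effect folded into the partial multiway cut used for $\tG_1$, so their internal edges never appear in any bound, and only their boundary edges inside $A$ must be paid. Those boundary edges are then controlled by an exchange argument: compare the algorithm's cut $(A,B)$ with the cut $(A',B')$ obtained by swapping $U_2\cap A$ and $U_1\cap B$ to their correct sides \emph{while keeping the $\overline V$-nodes where the algorithm placed them}; near-optimality of $(A,B)$ gives $w(E(A,B))\le w(E(A',B'))+e(n)$, i.e.\ \cref{eq:mincut_property}, which bounds the wrong-side attachment cost of the misplaced pieces by corresponding terms that do live inside $\sum_i\delta_G(V_i)$, plus a single $e(n)$. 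Your comparison cut $(\cup_{i\le k'}V_i,\,V\setminus\cup_{i\le k'}V_i)$ is too coarse for this purpose: it forgets where the algorithm placed $\overline V$ and the misplaced pieces, yields only $w(C)\le\sum_{i\le k'}\delta(V_i)+e(n)$, and gives no handle on those attachment edges, which is why your local-swap sketch cannot be completed as proposed. A secondary point: to claim success probability $1-\alpha\, O(\log k)$ you also need the paper's observation (\cref{lemma:reduction-to-log(k)-min-st-cuts}) that all min $s$-$t$ cut computations on one recursion level can be batched into a single invocation of $\cA$ on a disjoint union of subgraphs; the recursion as written makes about $k-1$ calls, so a naive union bound would only give $1-\alpha\, O(k)$.
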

A direct Corollary of \cref{thm:approx-multiway} is a $2$-approximation algorithm of the optimal multiway $k$-cut. Recall that the novelty of this algorithm is in using $O(\log{k})$ many min $s$-$t$ cut runs as opposed to $O(k)$ which is depicted in \cref{lemma:reduction-to-log(k)-min-st-cuts}.

\begin{corollary}
Given a graph $G$, integer $k\ge 2$ and any exact min $s$-$t$ cut algorithm, \cref{alg:multiwaycut} returns a multiway $k$-cut that is a $2$-approximation of the optimal multiway $k$-cut.
\end{corollary}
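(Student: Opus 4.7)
The plan is to derive this corollary directly from \cref{thm:approx-multiway} by instantiating it with the optimal multiway $k$-cut partition and observing that, for an exact min $s$-$t$ cut routine, both the additive error $e(n)$ and the failure probability $\alpha$ vanish.

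First, I would let $V_1^\star,\ldots,V_k^\star$ denote the partition of $V$ corresponding to an optimal multiway $k$-cut of $G$, with $s_i \in V_i^\star$ for each $i$, so that $\mathrm{OPT}$ equals the total weight of edges whose endpoints lie in distinct $V_i^\star$'s. Because $V_1^\star,\ldots,V_k^\star$ is a genuine (full) partition of $V$, it qualifies as a partial multiway $k$-cut in the sense used by \cref{thm:approx-multiway}, and moreover the leftover set $\overline{V} = V\setminus\bigl[\bigcup_{i=1}^k V_i^\star\bigr]$ is empty, so $w(E(\overline{V})) = 0$.

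Next, I would apply \cref{thm:approx-multiway} with this choice of partial partition and with $\cA$ being an exact min $s$-$t$ cut algorithm, which corresponds to taking $c = 0$ (hence $e(n) = 0$) and $\alpha = 0$ in the statement of that theorem. The conclusion then says that \cref{alg:multiwaycut} deterministically outputs a multiway $k$-cut of value at most
\[
    \sum_{i=1}^{k} \delta(V_i^\star) + w(E(\overline{V})) + O(\log(k)\cdot e(n)) \;=\; \sum_{i=1}^{k} \delta(V_i^\star).
\]

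Finally, I would use the standard double-counting observation: every edge of the optimal multiway cut has its two endpoints in two distinct parts $V_i^\star$ and $V_j^\star$, so such an edge contributes its weight to exactly two of the terms $\delta(V_1^\star),\ldots,\delta(V_k^\star)$. Hence $\sum_{i=1}^{k}\delta(V_i^\star) = 2\cdot \mathrm{OPT}$, and the output value is at most $2\cdot \mathrm{OPT}$, which is the desired $2$-approximation. There is no real obstacle here; the whole argument is a direct specialization of \cref{thm:approx-multiway}, and the only thing worth double-checking is that the hypothesis ``$e(n) = cn/\epsilon$ for constants $c\ge 0$'' of that theorem does allow $c = 0$, which it does.
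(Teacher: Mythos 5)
Your proposal is correct and follows essentially the same route as the paper: instantiate \cref{thm:approx-multiway} with the optimal partition (so $\overline{V}=\emptyset$ and $w(E(\overline{V}))=0$), take $e(n)=0$ for the exact min $s$-$t$ cut routine, and conclude via the double-counting identity $\sum_{i=1}^k \delta(V_i) = 2\cdot\mathrm{OPT}$. No gaps.
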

\begin{proof}
Let $\mathcal{A}$ be an exact $s$-$t$-cut algorithm that is part of the input to \cref{alg:multiwaycut}. So we have $e(n)=0$. Let $C_{ALG}$ be the output of \cref{alg:multiwaycut}. By \cref{thm:approx-multiway}, $C_{ALG}$ is a multiway $k$-cut. 
Let $C_{OPT}$ be the optimal multiway $k$-cut with partitions $V_1,\ldots,V_k$. Note that there are no nodes that are not partitioned, and hence $w(E(\overline{V}))=0$. So by \cref{thm:approx-multiway}, we have that $w(C_{ALG})\le \sum_{i=1}^k\delta_G(V_i)=2w(C_{OPT})$.
\end{proof}

\subsection{Proof of \cref{thm:approx-multiway}}
    We use induction on $k$ to prove the theorem. Suppose that \cref{alg:multiwaycut} outputs $C_{ALG}$. We show that the $C_{ALG}$ is a multiway $k$-cut and that the value of $C_{ALG}$ is at most $w(E(\overline{V})) + \sum_{i=1}^{k}\delta(V_i)+2\log{(k)} e(n)$.
    We will first perform the analysis of approximation, assuming that $\cA$ provides the stated approximation deterministically. At the end of this proof, we will consider that the approximation guarantee holds with probability $1 - \alpha$.

    \paragraph{Base case: $k = 1$.}
    If $k=1$, then $C_{ALG} = \emptyset$, and so it is a multiway $1$-cut and $w(C_{ALG})=0\le \delta(V_1)+w(E(\overline{V}))$.

    \paragraph{Inductive step: $k \ge 2$.}
    So suppose that $k\ge 2$. Hence, $k'\ge 1$ and $k-k'\ge 1$, where $k'$ is defined on \cref{line:define-k'}. 
    Let $(A,B)$ be the $s$-$t$ cut obtained in \cref{st_step}, where $\tG_1$ is the graph induced on $A$ and  $\tG_2$ is the graph induced on $B$. Since the only terminals in $\tG_1$ are $s_1,\ldots,s_{k'}$, we have that $V_1\cap A,\ldots, V_{k'}\cap A$ is a partial multiway $k'$-cut on $\tG_1$. 
    By the induction hypothesis, the cost of the multiway cut that \cref{alg:multiwaycut} finds on $\tG_1$ is at most $w(E(\overline{V}\cap A))+\sum_{i=1}^{k'} \delta_{\tG_1}(V_i\cap A) + 2\log{(k')} e(|A|)$. Similarly, by considering the partial multiway $(k-k')$ cut $V_{k'+1}\cap B,\ldots,V_k\cap B$ on $\tG_2$, the cost of the multiway cut that \cref{alg:multiwaycut} finds on $\tG_2$ is at most $w(E(\overline{V}\cap B))+\sum_{i=k'+1}^{k} \delta_{\tG_2}(V_i\cap B)+2\log{(k-k')}e(|B|)$. So the total cost $w(C_{ALG})$ of the multiway cut that \cref{alg:multiwaycut} outputs is at most 
    \begin{align}
        w(C_{ALG}) &\le w(E(\overline{V}\cap A))+w(E(\overline{V}\cap B))\label{eq:cost}\\
        &+\sum_{i=1}^{k'} \delta_{\tG_1}(V_i\cap A)+\sum_{i=k'+1}^{k} \delta_{\tG_2}(V_i\cap B)\nonumber\\
        &+w(E(A,B))\nonumber\\
        &+2\log{(k')}e(|A|)+2\log{(k-k')} e(|B|)\nonumber
    \end{align}
    First note that $C_{ALG}$ is a multiway $k$-cut: this is because by induction the output of the algorithm on $\tG_1$ is a multiway $k'$-cut and the output of the algorithm on $\tG_2$ is a multiway $(k-k')$-cut. Moreover, $E(A,B)\in C_{ALG}$. So the union of these cuts and $E(A,B)$ is a $k$-cut, and since each terminal is in exactly one partition, it is a multiway $k$-cut.
    
    Now, we prove the value guarantees. Let $U_1 = V_1\cup\ldots\cup V_{k'}$ and $U_2 = V_{k'+1}\cup\ldots\cup V_k$. So $U=U_1\cup U_2 = V_1\cup \ldots\cup V_k$ is the set of nodes that are in at least one partition. Recall that $\overline{V} = V\setminus U$ is the set of nodes that are not in any partition. 
    
    Consider the following cut that separates $\{s_1,\ldots,s_{k'}\}$ from $\{s_{k'+1},\ldots,s_k\}$: Let $A' = [U_1\cap A] \cup [U_1\cap B] \cup [\overline{V}\cap A]$. Let $B' = [U_2\cap B] \cup [U_2\cap A] \cup [\overline{V}\cap B]$. Since $(A,B)$ is a \emph{min} cut that separates $\{s_1,\ldots,s_{k'}\}$ from $\{s_{k'+1},\ldots,s_k\}$ with additive error $e(n)$, we have $w(E(A,B))\le w(E(A',B'))+e(n)$.
    Note that $A = [U_1\cap A] \cup [U_2\cap A] \cup [\overline{V}\cap A]$ and $B=[U_1\cap B] \cup [U_2\cap B] \cup [\overline{V}\cap B]$. So turning $(A,B)$ into $(A',B')$ is equivalent to switching $U_2\cap A$ and $U_1\cap B$ between $A$ and $B$. So we have that 
    \begin{align}
        &w(E(U_2\cap A,U_2\cap B))+w(E(U_2\cap A,\overline{V}\cap B)) + w(E(U_1\cap B,U_1\cap A))+w(E(U_1\cap B,\overline{V}\cap A))\nonumber \\&\le \label{eq:mincut_property} \\
        & w(E(U_2\cap A,U_1\cap A))+w(E(U_2\cap A,\overline{V}\cap A)) + w(E(U_1\cap B,U_2\cap B))+w(E(U_1\cap B,\overline{V}\cap B))\nonumber\\
        &+e(n)\nonumber
    \end{align}
    \cref{eq:mincut_property} is illustrated in \cref{fig:cutexample}.
    \begin{figure}
        \centering
        \includegraphics{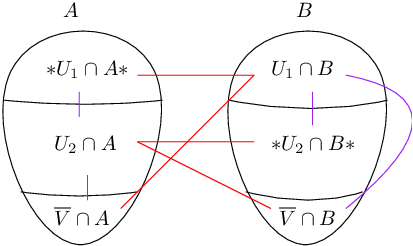}
        \caption{Node subsets of graph $G$. The subsets in asterisks have terminals in them. Red edges indicate the left-hand side edges in \cref{eq:mincut_property}, and purple edges indicate the edges on the right-hand side in \cref{eq:mincut_property}.}
        \label{fig:cutexample}
    \end{figure}
    Using \cref{eq:mincut_property}, we obtain that
    \begin{align*}
        w(E(A,B))& = w(E(U_2\cap A,U_2\cap B))+w(E(U_2\cap A,\overline{V}\cap B))\\
        &+ w(E(U_1\cap B,U_1\cap A))+w(E(U_1\cap B,\overline{V}\cap A))\\
        &+ w(E(U_2\cap A,U_1\cap B))+w(E([U_1\cap A]\cup [\overline{V}\cap A],[U_2\cap B]\cup [\overline{V}\cap B])) \\
        & \le w(E(U_2\cap A,U_1\cap A))+w(E(U_2\cap A,\overline{V}\cap A)) \\
        &+ w(E(U_1\cap B,U_2\cap B))+w(E(U_1\cap B,\overline{V}\cap B)) \\
        &+ w(E(U_2\cap A,U_1\cap B))+w(E([U_1\cap A]\cup [\overline{V}\cap A],[U_2\cap B]\cup [\overline{V}\cap B])) \\
        &+e(n)
    \end{align*}
So, we conclude that 
\begin{align}
    w(E(A,B))&\le 
        w(E(U_1\cap B, [U_2\cap B]\cup [\overline{V}\cap B]\cup [U_2\cap A]))\label{eq:EAB}\\
        &+w(E(U_1\cap A,[U_2\cap B]\cup [\overline{V}\cap B]))\nonumber \\
        &+w(E(U_2\cap A, [U_1\cap A]\cup [\overline{V}\cap A]))\nonumber \\
        &+ w(E(U_2\cap B,\overline{V}\cap A))\nonumber\\
        & + w(E(\overline{V}\cap A,\overline{V}\cap B))\nonumber \\
        &+e(n)\nonumber
\end{align}
We substitute $w(E(A,B))$ in \cref{eq:cost} using \cref{eq:EAB}. Recall that $U_1 = \cup_{i=1}^{k'}V_i$, $\delta_{\tG_1}(V_i\cap A)=w(E(V_i\cap A,A\setminus V_i))$ and $\delta_G(V_i) = w(E(V_i,V\setminus V_i))$.
For any $i\in \{1,\ldots,k'\}$, we have that $E(V_i\cap B, [U_2\cap B]\cup [\overline{V}\cap B]\cup [U_2\cap A])$ and $E(V_i\cap A,[U_2\cap B]\cup [\overline{V}\cap B])$ are both disjoint from $E(V_i\cap A,A\setminus V_i)$. Moreover all these three terms appear in $E(V_i,V\setminus V_i)$. So we have 
\begin{align*}
    &w(E(U_1\cap B, [U_2\cap B]\cup [\overline{V}\cap B]\cup [U_2\cap A]))\\
    &+w(E(U_1\cap A,[U_2\cap B]\cup [\overline{V}\cap B]))
    +\sum_{i=1}^{k'} \delta_{\tG_1}(V_i\cap A) \\
    &\le \sum_{i=1}^{k'}\delta_G(V_i)
\end{align*}
Note that the first two terms above are the first two terms in \cref{eq:EAB}. Similarly, we have
\begin{equation}\label{eq:u2}
w(E(U_2\cap A, [U_1\cap A]\cup [\overline{V}\cap A]))
+ w(E(U_2\cap B,\overline{V}\cap A))+
\sum_{i=k'+1}^{k} \delta_{\tG_2}(V_i\cap B)
\le \sum_{i=k'+1}^{k}\delta_G(V_i)
\end{equation}
Note that the first two terms above are the third and fourth terms in \cref{eq:EAB}. Finally $w(E(\overline{V}\cap A))+w(E(\overline{V}\cap B))+w(E(\overline{V}\cap A,\overline{V}\cap B)) \le w(E(\overline{V}))$. So, we upper-bound \cref{eq:cost} as
\begin{align*}
w(C_{ALG})&\le \sum_{i=1}^k\delta_G(V_i)+w(E(\overline{V}))\\
&+e(n)+2\log{(k')} e(|A|)+2\log{(k-k')}e(|B|).
\end{align*}
Since $k'=\floor{k/2}$ and $k-k'=\ceil{k/2}$, we have that $k'\le \floor{\frac{k+1}{2}}$ and $k-k'\le \floor{\frac{k+1}{2}}$. Moreover, since $e=cn/\epsilon$ for $\epsilon>0$ and $c\ge 0$, we have that $e(|A|)+e(|B|)\le e(|A|+|B|)=e(n)$. Therefore,
\begin{align*}
    & e(n)+2\log{(k')} e(|A|)+2\log{(k-k')}e(|B|)\\
    \le & e(n)(1+2\log\rb{\floor{\frac{k+1}{2}})}\le 2\log(k)e(n).
\end{align*}
The above inequality finishes the approximation proof.

\paragraph{The success probability.}
As proved by \cref{lemma:reduction-to-log(k)-min-st-cuts}, the min $s$-$t$ cut computations by \cref{alg:multiwaycut} can be seen as invocations of a min $s$-$t$ cut algorithm on $O(\log k)$ many $n$-node graphs; in this claim, we use $\cA$ to compute min $s$-$t$ cuts. By union bound, each of those $O(\log k)$ invocations outputs the desired additive error by probability at least $1 - \alpha O(\log k)$.

\subsection{Differentially Private Multiway Cut}

Our Differentially Private Multiway cut algorithm is essentially \cref{alg:multiwaycut} where we use 
 a differentially private min $s$-$t$ cut algorithm (such as \cref{alg:DP-min-cut}) in \cref{st_step} to compute a differentially private multiway $k$-cut. This concludes the approach for \cref{thm:dpmultiway}, which is a direct corollary of \cref{thm:dpmultiway-formal} below.
 \begin{theorem}\label{thm:dpmultiway-formal}
     Let $\epsilon>0$ be  a privacy parameter, $G$ an $n$-node graph, and $\cA$ an $\epsilon$-DP algorithm that with probability at least $1 - 1/n^2$ computes a min $s$-$t$ cut with additive error $O(e(n)/\epsilon)$ on any $n$-node graph.
     Then, \cref{alg:multiwaycut} is a $(\epsilon\log{k})$-DP algorithm that with probability at least $1 - O(\log k) \cdot n^{-2}$ finds a multiway $k$-cut with multiplicative error $2$ and additive error $O(\log{k}\cdot e(n)/\epsilon)$.
 \end{theorem}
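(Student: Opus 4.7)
The plan is to derive both the privacy and the approximation guarantees by combining the structural reduction of \cref{lemma:reduction-to-log(k)-min-st-cuts} (which bounds the number of min $s$-$t$ cut invocations by $O(\log k)$) with the approximation statement of \cref{thm:approx-multiway}, and then applying basic composition of differential privacy.

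For privacy, I will first use \cref{lemma:reduction-to-log(k)-min-st-cuts} to implement \cref{alg:multiwaycut} so that $\cA$ is invoked exactly $O(\log k)$ times total — once per recursion level, on the batched graph $\tG^{r}$ that collapses all subproblems at level $r$ into a single min $s$-$t$ cut instance. I then fix two neighboring input graphs $G$ and $G'$ differing on a single edge $e = uv$, and observe that, conditioned on the outputs of $\cA$ in all previous levels, the batched graphs $\tG^{r}$ and $\tG'^{r}$ at level $r$ either are identical (if $e$ has already been cut at an earlier level, so $u$ and $v$ lie in distinct disjoint subproblems at level $r$) or differ in the weight of a single edge by at most the original weight change (even after the terminal contractions, since a contraction only adds the weight of $e$ into a single edge of the batched graph). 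In either case, the call to $\cA$ at level $r$ is $\epsilon$-DP on its input; since the composition is adaptive — later levels depend on outputs of earlier ones — I invoke \cref{def:basic_comp} to conclude that the full algorithm is $(\epsilon \log k, 0)$-DP.

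For the approximation, I will apply \cref{thm:approx-multiway} with the linear (hence convex) additive-error function $e(n)/\epsilon$ and, as the partial multiway $k$-cut, the optimal non-private multiway $k$-cut with partitions $V_1, \ldots, V_k$. Since these cover all of $V$, the unassigned set $\overline{V}$ is empty and $w(E(\overline{V})) = 0$. Moreover, each edge of the optimal cut $OPT$ is incident to exactly two distinct $V_i$'s and is therefore counted twice in $\sum_{i=1}^k \delta_G(V_i)$, giving $\sum_{i=1}^k \delta_G(V_i) = 2 \cdot OPT$. Plugging these into \cref{thm:approx-multiway} yields an output of value at most $2 \cdot OPT + O(\log k \cdot e(n)/\epsilon)$. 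The success bound follows from a union bound over the $O(\log k)$ invocations of $\cA$, each of which succeeds with probability at least $1 - 1/n^2$.

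I expect the main obstacle to be the adaptive composition step coupled with the neighboring-graph verification: because the input graph to each recursion level is itself a function of the outputs of previous levels, one must argue privacy level-by-level conditioned on those outputs rather than treating all $O(\log k)$ calls as independent mechanisms on the same fixed input. Once this is handled carefully, the remainder is a direct application of \cref{thm:approx-multiway} and \cref{lemma:reduction-to-log(k)-min-st-cuts}.
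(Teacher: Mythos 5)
Your proposal is correct and follows essentially the same route as the paper: the approximation bound is obtained by instantiating \cref{thm:approx-multiway} (with the optimal multiway cut as the partial cut, so $\overline{V}=\emptyset$ and $\sum_i \delta_G(V_i)=2\cdot OPT$) together with the per-call success probability and a union bound, while privacy follows from the level-by-level view of \cref{lemma:reduction-to-log(k)-min-st-cuts} plus basic (adaptive) composition over the $O(\log k)$ recursion levels. Your additional care in checking that neighboring inputs induce, at each level, batched instances that are identical or differ in a single edge is a detail the paper leaves implicit, but it does not change the argument.
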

 \begin{proof}
The error guarantees directly result from \cref{thm:approx-multiway} and \cref{lem:stcut-additive-error}.
Hence, here we prove the privacy guarantee. Consider the recursion tree of \cref{alg:multiwaycut}, which has depth $O(\log k)$. As proved by \cref{lemma:reduction-to-log(k)-min-st-cuts}, each level of the recursion tree consists of running algorithm $\mathcal{A}$ on disjoint graphs, and so each level is $\epsilon$-DP. By basic composition (see \cref{def:basic_comp}), since \cref{alg:multiwaycut} is a composition of $O(\log k)$ many $\epsilon$-DP mechanisms, it is $(\epsilon\log k)$-DP.
 \end{proof}

\section{Empirical Evaluation}\label{sec:exp}
We perform an empirical evaluation on the additive error of \cref{alg:DP-min-cut} and validate our theoretical bounds.
\paragraph{Set-up.} As our base graph, we use email-Eu-core network \cite{snapnets}  which is an undirected unweighted $1,005$-node $25,571$-edge graph. This graph represents email exchanges, where two nodes $u$ and $v$ are adjacent if the person representing $u$ has sent at least one email to the person representing $v$ or vice-versa. However, this graph does not account for multiple emails sent between two individuals.
To make the graph more realistic, we add random weights to the edges from the exponential distribution with a mean of $40$ (rounded to an integer) to denote the number of emails sent between two nodes.
We take $10$ percent of the nodes and contract them into a terminal node $s$, and take another $10$ percent of the nodes and contract them into another terminal node $t$. We make different instances of the problem by choosing the nodes that contract into $s$ or $t$ uniformly at random. Note that node contraction into terminals is a standard practice in real-world min $s$-$t$ cut instances, as there are often nodes with predetermined partitions and to make a $s$-$t$ cut (or multiway $k$-cut) instance one needs to contract these nodes into one terminal node for each partition. We take the floor of the values $X_{s,u}$ and $X_{t,u}$ to obtain integral weights.

\paragraph{Baseline.} Let $C_s$ be the cut where one partition consists of only $s$, and let $C_t$ be the cut where one partition consists of only $t$.  We compare the cut value output by \cref{alg:DP-min-cut} against $\min(w(C_s),w(C_t))$. Particularly, if $C_{0}$ is the min $s$-$t$ cut of an instance and \cref{alg:DP-min-cut} outputs $C_{alg}$, then we compare relative errors $\frac{w(C_{alg})-w(C_0)}{w(C_0)}$ and $\frac{\min(w(C_s),w(C_t))-w(C_0)}{w(C_0)}$\footnote{As another standard heuristic one could consider a random $s$-$t$ cut. We do not include this baseline here as it often has a very high error and the terminal cut does much better than a random cut.}. We refer to the former value as \emph{the private cut relative error}, and to the later value as \emph{the terminal cut relative error}. 

\paragraph{Results.} We first set $\epsilon=0.5$. 
We then consider  $50$ graph instances, and for each of them, we evaluate the average private error over $50$ rounds of randomness used in \cref{alg:DP-min-cut}, see \cref{fig:experiments}(left) for the results.
For almost all instances, the private cut error (including the error bars) is less than the terminal cut error. Moreover, the private cut error is quite stable, and far below the theoretical bound of $n/\epsilon$. We next change $\epsilon$ and repeat the set-up above for each $\epsilon\in \{\frac{1}{15},\frac{1}{14},\ldots,\frac{1}{2},1\}$. For each $\epsilon$ in this set we produce $50$ graph instances, measure the average private error over $100$ rounds of randomness for each, and then average the terminal cut error and mean private cut error for each graph instance. In this way, we obtain an average value for terminal cut and private cut errors for each of the $\epsilon$ values; we refer to \cref{fig:experiments}(right) for the results. The linear relationship between the private cut error and $\frac{1}{\epsilon}$ can be observed in \cref{fig:experiments}(right).

\begin{figure}
     \centering
     \begin{subfigure}
         \centering
         \includegraphics[width=0.445\textwidth]{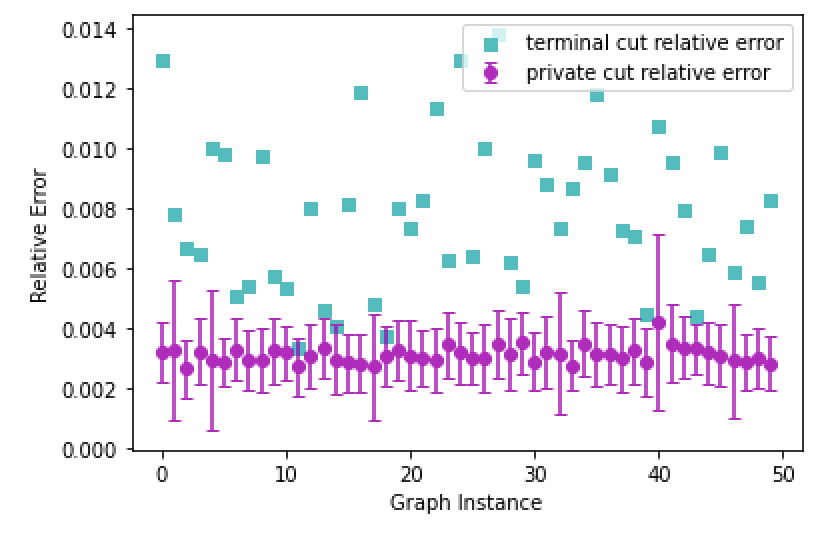}
     \end{subfigure}
     \hfill
     \begin{subfigure}
         \centering
         \includegraphics[width=0.45\textwidth]{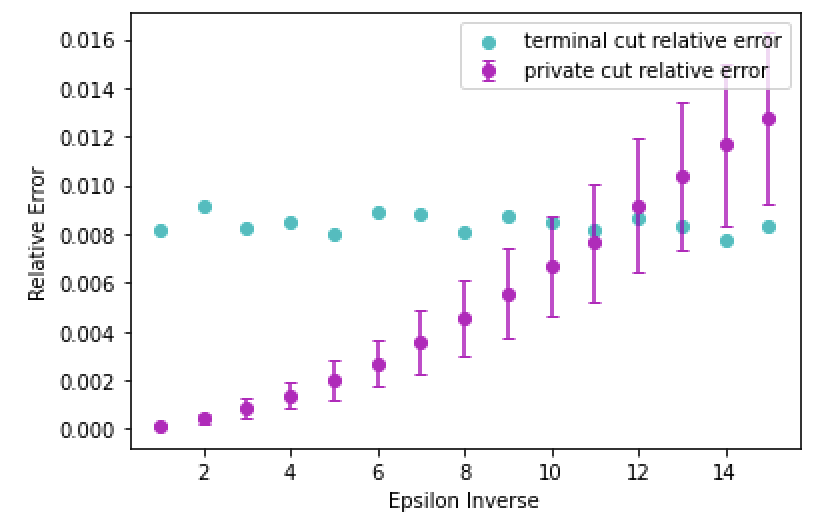}
     \end{subfigure}
                  \caption{(left) Terminal cut and private cut with error bars over $50$ graph instances, for $\epsilon=0.5$. For each instance, \cref{alg:DP-min-cut} was run $100$ times. (right) Average terminal cut value and private cut value over different values of $\epsilon$. The $x$ axis is $\frac{1}{\epsilon}$. For each value of $\epsilon$, the setup is the same as in the left figure.}
         \label{fig:experiments}
    \label{fig:exp}
\end{figure}

\section*{Acknowledgements}
We are grateful to anonymous reviewers for the valuable feedback. S.~Mitrovi\' c was supported by the Google Research Scholar Program.

\bibliography{bib} 
\bibliographystyle{alpha}
%\appendix
%\input{1000-breaking-ties-min-st-cut}
%\input{omitted_proofs}
\end{document}